\newtheorem{theorem}{Theorem}{}
{}
\newtheorem{lemma}{Lemma}
{}
\begin{document}

\title{STAIRoute: Early Global Routing using Monotone Staircases for Congestion Reduction}

\author{\au{Bapi Kar$^{1,2}$} \au{Susmita Sur-Kolay$^1$} \au{Chittaranjan Mandal$^2$}}

\address{\add{1}{Indian Statistical Institute, Kolkata, India}
\add{2}{Indian Institute of Technology, Kharagpur, India}
\email{bapi.kar@gmail.com, ssk@isical.ac.in, chitta@iitkgp.ac.in}}

\begin{abstract}
With aggressively shrinking process nodes, physical design methods face severe challenges due to poor convergence and uncertainty in getting an optimal solution. An early detection of potential failures is thus mandated. This has encouraged to devise a feedback mechanism from a lower abstraction level of the design flow to the higher ones, such as placement driven synthesis, routability (timing) driven placement etc. 

Motivated by this, we propose an early global routing framework using pattern routing following the floorplanning stage. We assess feasibility of a floorplan topology of a given design by estimating routability, routed wirelength and vias count while addressing the global congestion scenario across the layout. Different capacity profiles for the routing regions, such as uniform or non-uniform different cases of metal pitch variation across the metals layers ensures adaptability to technology scaling. The proposed algorithm STAIRoute takes $O(n^2kt)$ time for a given design with $n$ blocks and $k$ nets having at most $t$ terminals. Experimental results on a set of floorplanning benchmark circuits show $100\%$ routing completion, with no over-congestion in the routing regions reported. The wirelength for the $t$-terminal ($t\geq$ 2) nets is comparable with the Steiner length computed by FLUTE. An estimation on the number of vias for different capacity profiles is also presented, along with congestion and runtime results.
\end{abstract}

\keywords{Early global routing, Routing region definition, Floorplan bipartitioning, Monotone staircase pattern, Congestion}

\maketitle

\section{Introduction}
In IC design flow, \textit{global routing} (GR) is indispensable, particularly as an aide to \textit{detailed routing} (DR) of the wires through different metal layers. Shrinking feature dimensions with technological advances in IC fabrication process pose more challenges on the physical design phase. There has been a tremendous increase in routing constraints arising from not only stringent layout design rules but also process variations and sub-wavelength effects of optical lithography. Successful routing completion of the nets without too many iterations or sacrifice in the performance of the designs is thus mandated.
\begin{figure}[!ht]
\centering{
\includegraphics[scale=0.32]{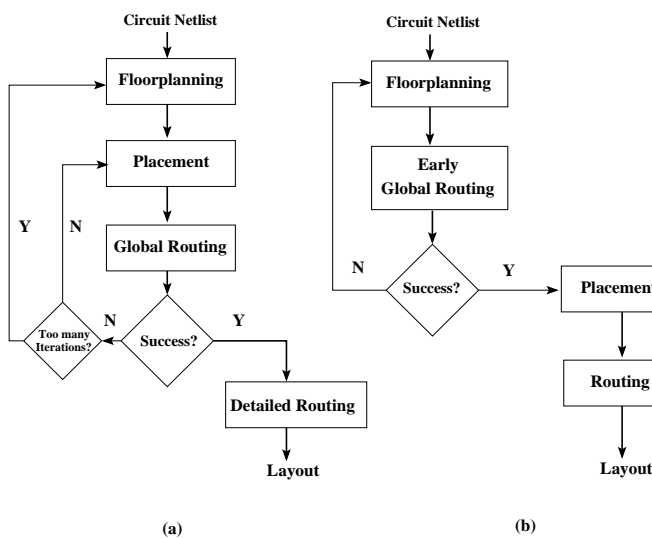}}
\caption{Physical Design Flow: (a) Conventional, and (b) Proposed \label{fig:pdflow}}
\end{figure}

In traditional grid graph based post-placement routing methods, multi-terminal nets are decomposed into two terminal segments using Steiner tree decomposition using \textit{Rectilinear minimum spanning tree} (RMST) \cite{royj}, or \textit{Rectilinear Steiner minimal tree} (RSMT) \cite{panm1} topology as an initial solution with minimum length. Subsequently, congestion driven routing for each two terminal net segment is adopted through the routing regions. The congestion models in those methods have been formulated based on the capacity of the grids and the routing demands through them, along with a penalty function. For any unsuccessful routing due to over congestion ($\geq 100\%$), \textit{Rip-up and Re-reroute} (RRR) techniques using maze routing \cite{lee,sherw} have been applied for possible routing completion while compromising in routed wirelength due to detour. The major challenge in the event of unsuccessful routing is to get back to placement 
stage in order to generate a new placement solution, but with no guarantee for successful routing completion (vide Figure \ref{fig:pdflow} (a)). This may lead to several iterations until the goal is achieved and thus prove to be very costly if the entire design implementation is not completed within a stipulated time frame. In other words, this may have severe impact on time-to-market of the intended design. 

The possibility of recurring iterations at the placement stage (vide Figure \ref{fig:pdflow} (a)) due to failure at global routing stage may however be avoided if we can predict the feasibility of global routing as early as at the floorplanning stage, as depicted in Figure \ref{fig:pdflow} (b). This comprises of the identification of monotone staircase regions as  routing resources while estimating their capacity and formulating the congestion model. These types of routing resources are known to have advantages of \textit{acyclic routing order} 
for successful routing completion \cite{ssk,guru} and avoidance of \textit{switch box routing} \cite{sherw}. They also allow easy \textit{channel resizability} \cite{guru} to mitigate heavy congestion ($\geq 100\%$).

Pattern routing such as single bend (L shaped) \cite{kast}, two bend (Z shaped) \cite{kast,panm1}, or even with more bends such as monotone staircase patterns \cite{zcao,ychang,luj} in global routing has increased significantly in order to find a suitable routing path for a given net. In the recent past, single bend (L shaped) \cite{kast}, two bend (Z shaped) 
\cite{kast,panm1}, or even with more bends such as monotone staircase patterns \cite{zcao,ychang} has gained significant importance in grid-based global routing. With the increasing number of bends, these patterns yield more flexibility in order to find a possible routing path, but at the cost of more vias, if feasible. It was also shown that pattern based routing \cite{kast} is much faster than maze routing, while monotone staircase pattern routing \cite{zcao} has the same time complexity as with Z shaped patterns. A thoughtful trade off between routability (also wirelength) and the number of vias has to be made while keeping in mind that the routing resources are not over congested. Recent work on monotone staircase bipartitioning method \cite{karb3} attempted to address the minimization of the number of vias along a monotone staircase routing path by minimizing the number of bends in it \cite{zcao}. Additionally, the pattern based routing are shown to help in cross talk minimization \cite{kast}.

\subsection{Outline of this work} 
In this paper, we present a new paradign in routability assessment following the floorplanning stage. The proposed routing model uses monotone staircase regions identified in a given floorplan for assessing routing completion without allowing any over-congestion in any of these routing regions. It is important to note that this routing model is different than the grid graph based model used in the post-placement global routers \cite{zcao, kast, mcho1, panm1, royj,ychang,kdai,wliu}. The outline of the proposed early global routing method, we call it STAIRoute, presented in Figure \ref{fig:view} as follows: 
\begin{enumerate}
 \item \textit{Identification of routing regions} in a given floorplan by using a monotone staircase bipartitioning algorithm such as \cite{karb,karb3}; 
 \item \textit{Graph theoretic} formulation based on these regions is used to determine a feasible routing paths for the given nets; uses congestion aware layer-assignment approach.
 \item \textit{Routing order} of the nets based on half perimeter wire length (HPWL) and the number of terminals (Netdegree);
 \item \textit{Multi-terminal net decomposition} to identify a set of two-terminal net segments using minimum spanning tree algorithm; obtained a new \textit{Steiner tree} topology;
 \item \textit{Routing through a number of metal layers} using a shortest path algorithm to find an acceptable routing path;
 \item \textit{Ensuring congestion} in the routing regions is restricted to $100\%$ across the metal layers;
 \item \textit{Supporting} both unreserved and reserved layer based routing; provides an estimation on the number of vias for possible layer change along the routing paths.
\end{enumerate}

\begin{figure}[!ht]
\centering{
\includegraphics[scale=0.31]{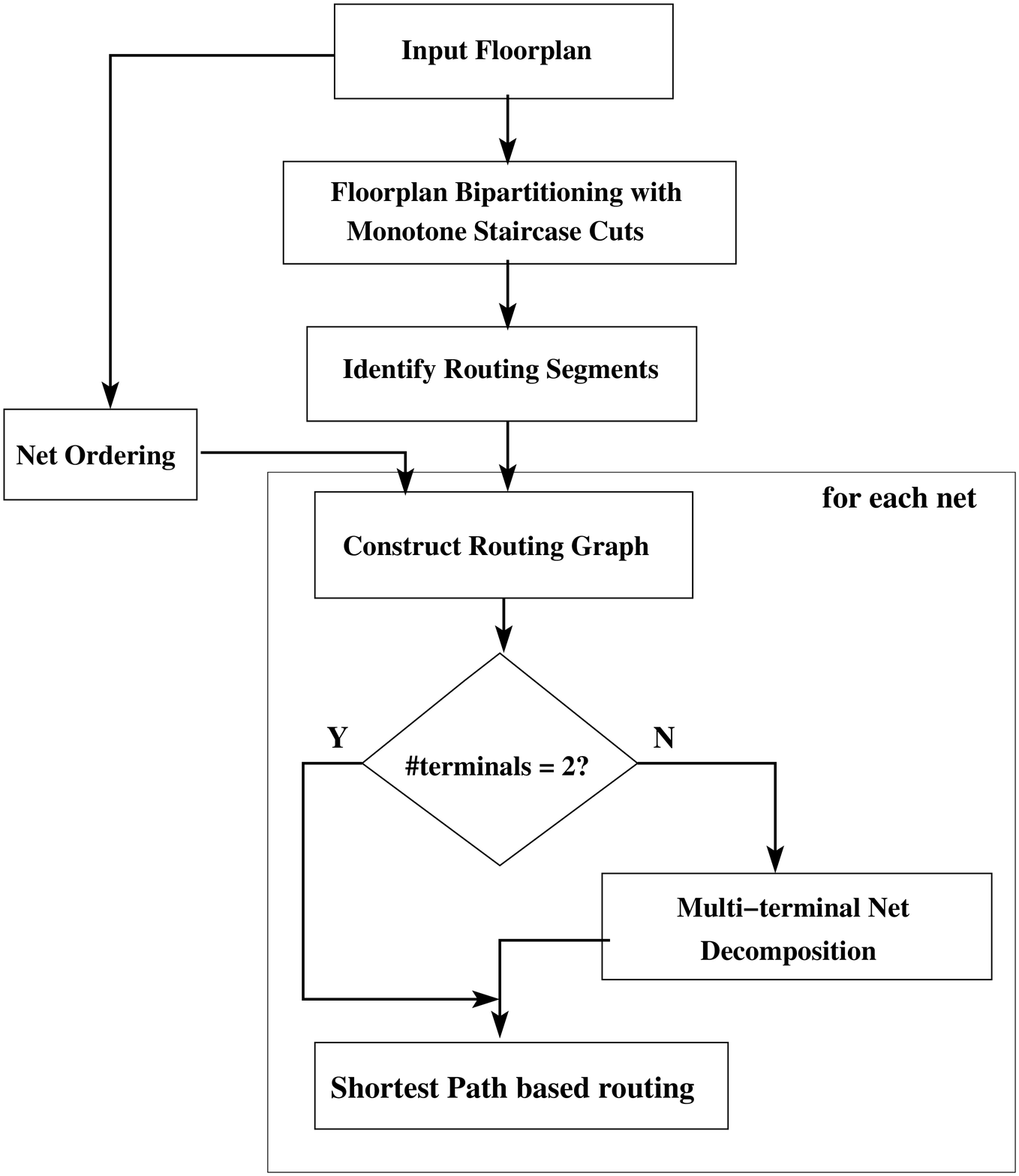}}
\caption{Outline of the proposed early global routing framework \label{fig:view}}
\end{figure}

This paper is organized as follows: Section \ref{sec:prelim} revisits the preliminaries of monotone staircase bipartitioning method followed by Section \ref{sec:work} that includes related topics and the proposed global routing method using monotone staircases as the routing resources. Results are presented in Sections \ref{sec:result}, and the concluding remarks in \ref{sec:discuss}.

\section{Preliminaries on Monotone Staircase Regions}
\label{sec:prelim}
In this section, we review the monotone staircase regions in a given floorplan and the floorplan birpartitioning framework to obtain them. Methods for top-down hierarchical monotone staircase bipartitioning of floorplans, both in \textit{Area-balanced} and \textit{Number-balanced} bipartition appear in \cite{dasg,karb,karb3,majum1,majum2}. \textit{Area-balanced} bipartition is employed when the area of the blocks in a given floorplan have significant variance whereas \textit{Number-balanced} bipartition is applicable for negligible variance in the area of the blocks. In \cite{majum1,majum2}, the balanced bipartitioner used iterative max-flow based \cite{yang} min-cut algorithm and thereby incurred higher time complexity at each level of the hierarchy. In \cite{dasg}, emphasis has been given merely to the hierarchical number balanced monotone staircase bipartitioning using depth-first traversal method in linear time at a given level of the hierarchy. 
\begin{figure}[!ht]
\centering{
\includegraphics[scale=0.34]{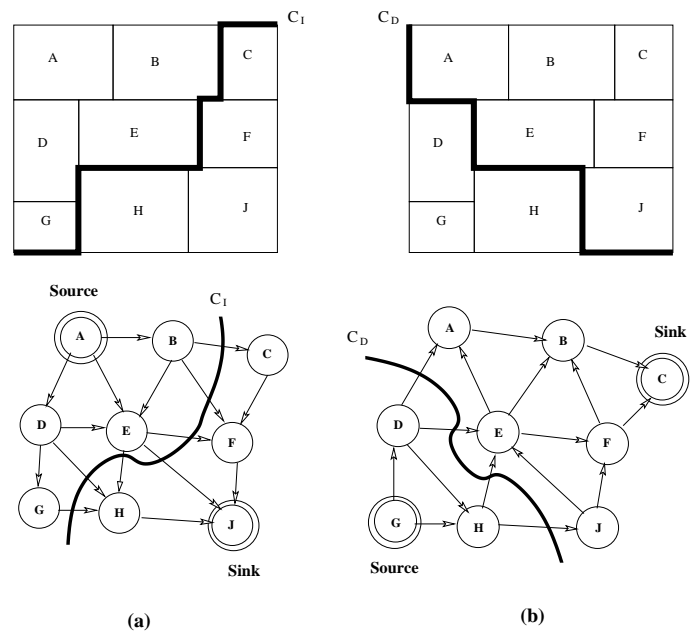}}
\caption{A floorplan with staircase regions and the corresponding ms-cuts in its block adjacency graph (BAG): (a) monotone increasing staircase (MIS), (b) monotone decreasing staircase (MDS) \cite{karb} \label{fig:mis}}
\end{figure}

Recently, a faster yet more accurate top-down hierarchical monotone staircase bipartition \cite{karb} has been proposed to generate monotone staircase cuts, abbreviated as \textit{ms-cut} as we subsequently refer to it. This algorithm takes $O(nk\log n)$ time and also ensures ms-cuts of increasing (decreasing) orientation at alternate levels of the hierarchy (vide Figure \ref{fig:graph} 
(a) and (b)), namely \textit{MSC tree}. 

In order to identify monotone increasing (decreasing) staircase regions ($C_I$ ($C_D$) as depicted in Figure \ref{fig:mis} (a) ((b)), abbreviated as \textit{MIS (MDS)}, for a given a planar embedding of a floorplan topology with $n$ blocks, an unweighted directed graph \cite{karb,majum1}, called \textit{block adjacency graph} (BAG) $G(V_b,E_b)$ is formulated. The graph is defined as follows: $V_b$ = $\{b_i \vert  \forall$ blocks $b_i$ in the floorplan$\}$ and $E_b = \{ <b_i,b_j> \vert$ block $b_i$ is either on the left of or above (below) its adjacent block $b_j\}$. Note that $|V_b|$ = $n$ and 
$|E_b|$ = $3(n-1)$ (vide Lemma \ref{lemma2}).

\begin{lemma}
\label{lemma1}
Given a floorplan with $n$ blocks, its MSC tree $(V_m,E_m)$ corresponding to the set $C$ of monotone staircase regions has $n-1$ ms-cuts (internal nodes).
\end{lemma}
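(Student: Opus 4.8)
The plan is to recognize the MSC tree as a full binary tree whose leaves are the individual blocks, and then invoke the standard leaf-versus-internal-node count for such trees. First I would argue that the top-down hierarchical bipartitioning builds a binary tree: the root corresponds to the entire floorplan (all $n$ blocks), and each internal node corresponds to a single ms-cut that splits its associated region into exactly two non-empty sub-regions, which become its two children. The recursion halts precisely when a region contains a single block, so the leaves of the tree are in one-to-one correspondence with the $n$ blocks. Hence the MSC tree is a full binary tree (every internal node has exactly two children) with exactly $n$ leaves, and the ms-cuts are exactly its internal nodes.

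Given this structure, I would complete the argument by induction on $n$. For the base case $n=1$, a single block forms a region that admits no further cut, so the tree has $0 = n-1$ internal nodes. For the inductive step, suppose the claim holds for every floorplan with fewer than $n$ blocks. The root ms-cut partitions the $n$ blocks into two non-empty groups of sizes $n_1$ and $n_2$ with $n_1 + n_2 = n$. Each of the two resulting sub-regions is itself bipartitioned recursively, yielding two sub-MSC-trees that, by the induction hypothesis, contain $n_1 - 1$ and $n_2 - 1$ ms-cuts respectively. Counting the root cut together with these gives $1 + (n_1 - 1) + (n_2 - 1) = n-1$ ms-cuts in total, as claimed.

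The main obstacle I anticipate is not the counting itself but justifying the two structural facts that make the tree a full binary tree: (i) every region containing at least two blocks indeed admits a monotone staircase cut that separates it into two non-empty parts, so that no internal node is ever a dead end, and (ii) the bipartitioning can always be continued until each leaf region is a single block. Both of these should follow from the existence and correctness of the monotone staircase bipartitioning algorithm established in the cited work \cite{karb,karb3}; I would state them explicitly as the hypotheses under which the lemma holds. Once the full-binary-tree structure is secured, the $n-1$ count is immediate, whether obtained from the induction above or, equivalently, from the elementary identity that a full binary tree with $\ell$ leaves has exactly $\ell-1$ internal nodes.
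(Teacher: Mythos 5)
Your proof is correct and rests on the same key observation as the paper's: the MSC tree is a full binary tree whose $n$ leaves are the blocks and whose internal nodes are the ms-cuts, so the count of internal nodes is $n-1$. The only difference is cosmetic --- the paper derives the identity via the degree-sum relation $\sum_i \mathrm{OutDeg}(v_i) = |E_m| = |V_m|-1$ rather than by induction on $n$, but both are standard proofs of the same elementary fact, and your explicit attention to why every multi-block region admits a cut is a reasonable (if brief) addition.
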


\begin{proof}
In a full binary tree, an internal node has two children (out degree = $2$) whereas an external (leaf) node has no children (out degree = 0). In our case, the internal nodes correspond to the ms-cuts in the MSC tree, and the external nodes are the blocks in the given floorplan.\\
Hence $\sum_{i}${OutDeg($v_i$)} = $|E_m|$ = $|V_m| -1$, where $T_m$ = $G(V_m,E_m)$ is the resulting MSC tree as shown in Figure \ref{fig:graph}(b).\\
$\Rightarrow$ $2*|C|$ + $0*n$ = ($|C|$ + $n$) -1; where $|C|$ and $n$ are the number of ms-cuts and blocks respectively.\\
$\Rightarrow$  $|C|$ = $n-1$.\\
\end{proof}

\section{This Work} 
\label{sec:work}
The contribution in this paper is to propose a new routing model based on the floorplan bipartitioning results for a given design. Then we showcase how this routing model is used to estimate several routability metrices such as routing completion, routed wirelength, via count and the global congestion scenario, for a given number of metal layers. Following sections discuss about this work in detail.

\subsection{Routing Region Definition}
Using the recent hierarchical monotone staircase bipartition framework \cite{karb}, we obtain a set of MIS (MDS) regions $C$ = \{$C_i$\} at alternate levels of the hierarchy in MSC tree. For the rest of the paper, we refer \textit{region} and \textit{segment} to monotone staircase region and its rectilinear segment respectively. These regions are used as the routing resources for the proposed early global routing framework. 

Each region consists of one or more rectilinear segments, bounded by a distinct pair of blocks. For each segment in a region, the number of nets to be routed through it is estimated from the number of cut nets in the respective ms-cut node in the MSC tree. This is denoted as \textit{reference capacity} $r_k$ for the $k$th segment. At any point during routing, its capacity \textit{usage} $u_k$ (also known as routing demand in the existing literature), gives the region utilization.  
\begin{figure*}[!ht]
\centering{
\includegraphics[scale=0.61]{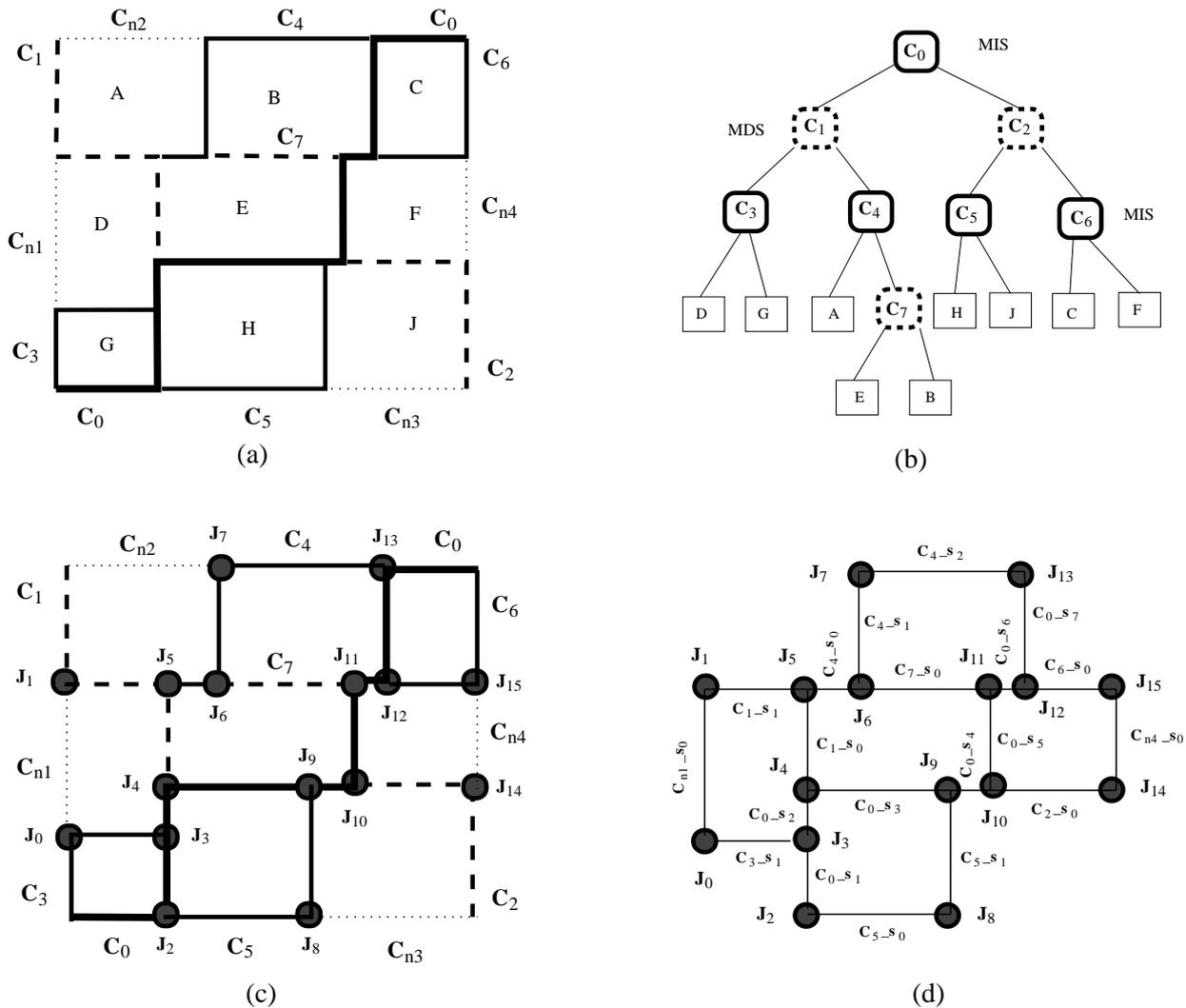}}
\caption{A floorplan along with: (a) its hierarchy of monotone staircase regions, (b) the corresponding MSC tree, (c) T-junctions at which staircases intersect, and (d) the corresponding junction graph \label{fig:graph}}
\end{figure*}

As in Figure \ref{fig:mis} (a), the highlighted ms-cut $C_I$ on BAG contains seven cut edges \{$GH$, $DH$, $EH$, $EJ$, $EF$, $BF$ and $BC$\} corresponds to an MIS region $C_I$ having seven segments. Additionally, it has two more horizontal segments: one for the bottom side of the block $G$ at the bottom-left corner with the boundary of the floorplan while the other is on the top side of the block $C$ at the top-right corner of the floorplan. Likewise, an MDS region $C_D$ having six segments is illustrated in Figure \ref{fig:mis} (b). Overall, a flooplan along with a set of MIS/MDS regions $C_0$ to $C_7$ is shown in Figure \ref{fig:graph} (a). Regions with one segment having either vertical or horizontal orientation are termed as \textit{degenerated} staircase regions, $C_7$ being such an example of such a degenerated region. 

However, there exist a few more isolated segments along the boundary of the floorplan that are not identified as part of the MSC tree generation, and can be termed as \textit{non-MS regions}. In Figure \ref{fig:graph}(a), $C_{n1}$ to $C_{n4}$ are the examples of few such regions. Their routing capacity $r_k$ is computed based on the number terminals on it, and those with nonzero $r_k$ contribute to routing as valid routing resources.

\subsection{Routing Model: the Junction Graph (JG)}
\label{sub:jg}
Now, we present our routing model using a set of monotone staircase regions and their intersection points, called T-junctions (vide Figure \ref{fig:graph} (c)). It is evident that there exists a segment between each pair of adjacent T-junctions, henceforth referred as $junctions$.

\begin{lemma}
\label{lemma2}
Given a floorplan with $n$ blocks, the number of T junctions in it is $2n-2$.
\end{lemma}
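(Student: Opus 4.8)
The plan is to treat the dissected floorplan as a connected plane graph $H=(V_H,E_H)$ and extract the T-junction count from Euler's formula. I would take the vertices of $H$ to be the junction points of the dissection, the edges to be the maximal axis-parallel wall pieces joining two consecutive junctions, and the bounded faces to be exactly the $n$ blocks. The four corners of the enclosing rectangle are the only degree-$2$ vertices; every other junction, whether interior or lying on the outer boundary, is a T-junction where precisely three wall pieces meet and so has degree $3$. This rests on the standard floorplan assumption (a \emph{mosaic} dissection): no two dissection lines cross in a ``$+$'' (degree-$4$) fashion, and every bounded face is a single block.

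With this model the bookkeeping is immediate. Writing $T$ for the number of T-junctions, we have $|V_H|=T+4$ and $F=n+1$ (the $n$ blocks together with the unbounded outer face). The handshaking lemma gives $2|E_H|=3T+2\cdot 4$, that is $|E_H|=\tfrac12(3T+8)$. Substituting $|V_H|$, $|E_H|$ and $F$ into Euler's relation $|V_H|-|E_H|+F=2$ and clearing denominators yields a single linear equation in $T$, whose solution is $T=2n-2$.

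The real effort lies in justifying the vertex/face accounting rather than in the algebra, and that is where I expect the only difficulty. Three points need care: that the four bounding-box corners are the sole degree-$2$ vertices (a straight wall never merely bends, since bends occur only at block corners, which in a mosaic dissection coincide with T-junctions or box corners); that a junction sitting on the outer boundary is still degree $3$ (the boundary continues to either side while one interior wall abuts it); and that the $n$ bounded faces are in bijection with the blocks, so that $F=n+1$ with no empty rooms. Once these are pinned down, the degree-sum and Euler computations leave no slack.

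As a sanity check and an alternative route I would offer an incremental argument: a single block has $0=2\cdot 1-2$ T-junctions, and inserting one wall to split an existing room into two increases the block count by one while creating exactly two new T-junctions at the endpoints of the inserted segment, so the relation $T=2(n-1)$ is preserved. The Euler argument is preferable as the headline proof, however, because it covers non-slicing topologies (e.g.\ pinwheel arrangements) without presupposing any such incremental construction.
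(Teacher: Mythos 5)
Your proof is correct, and it reaches the count by a genuinely different (dual) route. The paper applies Euler's formula to the \emph{block adjacency graph} $G(V_b,E_b)$: it observes that every T-junction corresponds to an internal face of the BAG, that each such face is a triangle (this is where the ``no four blocks meet at a point'' assumption is hidden), derives $|E_b|=3(n-1)$, and reads off the junction count as the number of internal faces, $2|E_b|/3=2n-2$. You instead apply Euler's formula and the handshaking lemma directly to the dissection (wall) graph, identifying T-junctions with its degree-$3$ vertices; the two graphs are essentially planar duals of one another, so the arguments are mirror images. Your version is more self-contained -- it needs no auxiliary graph and it makes the mosaic-floorplan hypotheses (no ``$+$'' crossings, corners are the only degree-$2$ vertices, bounded faces are exactly the blocks) explicit, which is where the real content lies; your incremental sanity check is also a nice confirmation, and you correctly flag that it would not by itself cover non-slicing (wheel) topologies. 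What the paper's formulation buys in exchange is the intermediate identity $|E_b|=3(n-1)$, which it reuses elsewhere (e.g.\ in bounding the size of the junction graph and in Lemma~\ref{lemma3a}); your argument yields the edge count of the wall graph instead, namely $|E_H|=(3T+8)/2=3n+1$.
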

\begin{proof}
Every internal face in BAG corresponds to a T-junction, and is bounded by $3$ edges. Thus we have $3(f-1)$ = $2m$ excluding the exterior face, where $f$ and $m$ being the number of faces and edges in BAG respectively. Using \textit{Euler formula} of $n-m+f=2$ for planar graphs, and replacing $f$ by $2m/3+1$, we get $m = 3(n-1)$.\\
Hence, the number of T-junctions in the floorplan = $f-1$ = $2m/3$  = $2n-2$.
\end{proof}

Using the notion of T-junctions, we construct a weighted undirected graph (vide Figure \ref{fig:graph}(d)), called \textit{junction graph}, $G_j$ = ($V_j$,$E_j$), where $V_j$ = $\{J_p\}$ corresponds to a set of junctions, and $E_j$ = \{\{$J_p$,$J_q$\} $|$ as a pair of adjacent junctions $\{J_p, J_q\}$ with a segment $s_k$ of a region $C_m \in C$ between them\}. As depicted in Figure \ref{fig:graph} (c), all the junctions, except those near the corners of the floorplan with degree two, have degree of three in $G_j$, i.e., have edges with three adjacent junctions. Using Lemma \ref{lemma2}, it can be shown that $|E_j|$ = $3n-7$. 

The weight of each edge $e_{pq} \in E_j$ is computed as:
\begin{equation}
\label{eqn1}
wt(e_{pq}) = length_(s_k)/(1-p_k)
\end{equation}
where $p_k$, the \textit{normalized usage} through the segment $s_k$, is defined as:
\begin{equation}
\label{eqn1a}
p_k = u_k/r_k
\end{equation}
And, we define $(1-p_k)$ as the \textit{usage penalty} on the edge weight for routing a net through the corresponding segment $s_k$. In Figure \ref{fig:cong}, we illustrate the variation of edge weight with respect to the \textit{normalized usage} $p_k$. 
\begin{figure}[!ht]
\centering{
\includegraphics[scale=0.42]{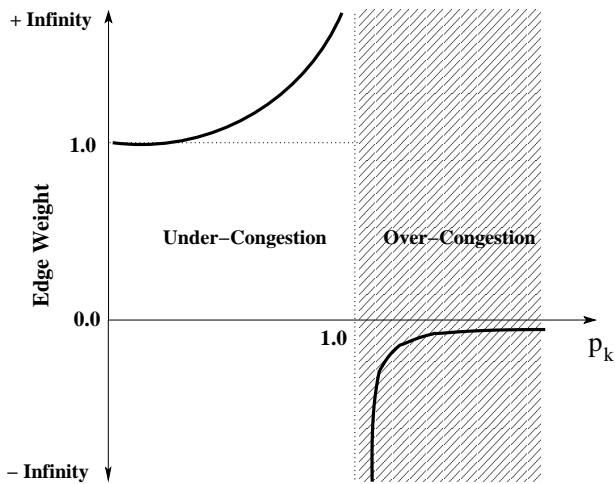}}
\caption{Edge weight ($wt(e_{pq})$) vs. normalized usage ($p_k$) \label{fig:cong}}
\end{figure}

In the proposed routing framework, congestion is avoided in all the segments by restricting $p_k$ to be no more than $1.0$. This is achieved by setting the edge weight to \textit{Infinity} whenever $p_k = 1.0$. The corresponding edge is virtually removed from $E_j$. This ensures that the case of $p_k > 1.0$ does not occur. In Figure \ref{fig:cong}, we mark the regions $p_k \leq 1.0$ and $p_k > 1.0$ as \textit{Under-Congestion} and \textit{Over-Congestion} regions respectively. Therefore, we restrict to \textit{Under-Congestion} while formulating the global routing graph such that there is no congestion in any of the routing resources. However, it may be noted that routing may fail for some of the nets due to insufficient capacity of some of the routing resources for a specified number of metal layers. 

\begin{lemma}
\label{lemma3a}
The construction of the junction graph takes $O(n)$ time.
\end{lemma}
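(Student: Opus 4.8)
The plan is to bound the running time by the size of the object being built, namely $G_j$ itself, and then to argue that each vertex and each edge is produced in constant time from data structures already available after the monotone staircase bipartitioning stage. First I would fix the sizes. By Lemma \ref{lemma2} the number of T-junctions is $2n-2$, so $|V_j| = 2n-2$; combined with the degree argument noted after Lemma \ref{lemma2} (corner junctions have degree two and all others degree three), this gives $|E_j| = 3n-7$. Both quantities are $\Theta(n)$, so merely instantiating the vertex and edge lists of $G_j$ already costs $O(n)$, which both sets the target and shows it cannot be beaten.

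Next I would describe the construction as a single pass over the planar embedding of BAG, which is supplied as the output of the bipartition. Each internal face of BAG corresponds to exactly one T-junction, i.e.\ one vertex of $V_j$; enumerating all faces of a planar graph given its rotation system takes time linear in the number of half-edges, here $O(|E_b|) = O(3(n-1)) = O(n)$. During the same traversal, each segment $s_k$ lying between a pair of adjacent junctions $\{J_p,J_q\}$ contributes one edge $\{J_p,J_q\} \in E_j$; since a segment is incident to only a constant number of faces, its endpoints are identified in $O(1)$, for a total of $O(n)$ over all segments.

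Finally the edge weights must be filled in. For each $e_{pq}$, evaluating \eqref{eqn1} requires $length(s_k)$ and, through \eqref{eqn1a}, the pair $(u_k,r_k)$. The reference capacity $r_k$ is read from the cut-net count of the associated ms-cut node, which is recorded while building the MSC tree, and the segment length is read directly off the embedding; hence each weight is computed in $O(1)$, adding a further $O(|E_j|) = O(n)$. Summing the three stages yields $O(n)$ overall.

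The hard part is not the arithmetic but justifying the linearity of the \emph{structural} step: that the T-junctions and their segment adjacencies can be enumerated without superlinear search. This rests on the assumption that the planar embedding of BAG (its rotation system and face incidences) is passed along from the bipartitioning algorithm, and that $r_k$ is cached per ms-cut during MSC tree construction rather than recomputed on demand. Under these assumptions the face-traversal bound $O(|E_b|) = O(n)$ carries the whole argument; without them, a naive recomputation of capacities or a search for adjacencies could inflate the cost beyond linear.
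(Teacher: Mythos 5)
Your proposal is correct and follows essentially the same approach as the paper: both rest on Lemma \ref{lemma2} to bound the number of segments (BAG edges) by $O(n)$ and then insert one junction-graph edge per segment in constant time. Your version is considerably more careful than the paper's three-line argument --- in particular about enumerating junctions via the faces of the planar embedding and about the $O(1)$ cost of each weight evaluation --- but it is a fleshed-out version of the same counting argument, not a different route.
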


\begin{proof}
By Lemma \ref{lemma2}, we know that there are $O(n)$ edges in the BAG, where each edge corresponds to a segment. Therefore, for each segment $s_k$ having a pair of junctions \{$J_p$,$J_q$\} as its endpoints, an edge is inserted in the $G_j$. Hence, the construction of the junction graph $G_j$ takes $O(n)$ time.
\end{proof}

\subsubsection{Routing Model: the Global Staircase Routing Graph (GSRG)}
In this section, we present our proposed global routing framework by extending the junction graph $G_j$ for each net.
\begin{figure*}[!ht]
\centering{
\includegraphics[scale=0.48]{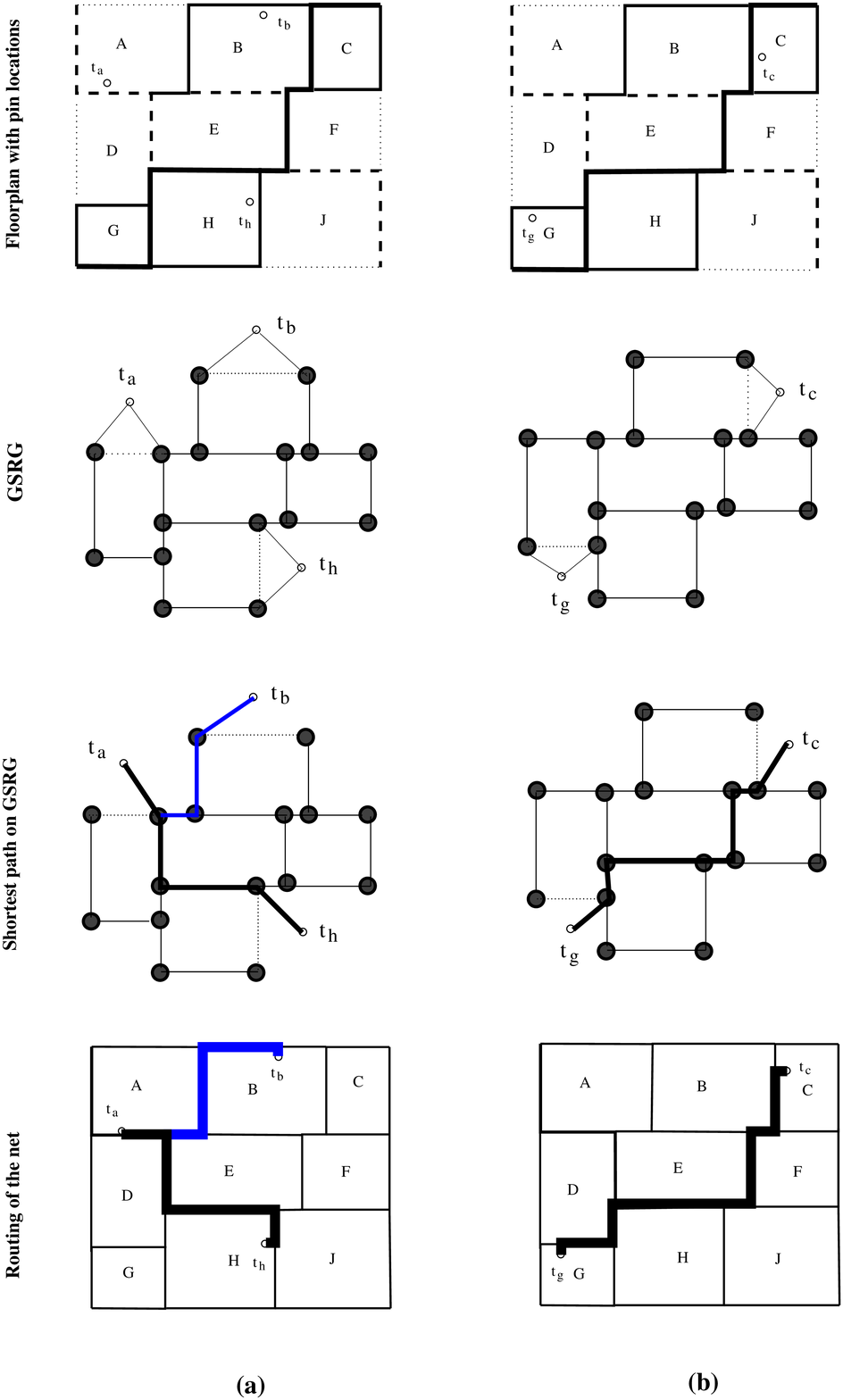}}
\caption{Steps for constructing the Global staircase routing graph (GSRG) from the Junction graph for (a) a $3$-terminal net $n_i = \{t_a,t_b,t_h\}$, and (b) a $2$-terminal net $n_j = \{t_c,t_g\}$, along with corresponding routed paths in GSRG and finally routed nets in the floorplan topology \label{fig:gsrg}}
\end{figure*}

Let $N$ be a set of nets for a given floorplan. For each $t$-terminal ($t\geq2$) net $n_i \in N$, we use $G_j$ as the backbone to derive the corresponding \textit{Global Staircase Routing Graph} (GSRG) as depicted in Figure \ref{fig:gsrg} (a) and \ref{fig:gsrg} (b). The GSRG is defined as $G_{r}^i$ = ($V_{r}^i$,$E_{r}^i$), where $V_{r}^i$ = $V_j$ $\bigcup$ \{$t_l\vert t_l\in n_i$\}, and $E_{r}^i$ = $E_j \bigcup E_{lp}$. Each pin-junction edge $e_{lp} \in E_{lp}$ is defined as $e_{lp}$ = \{$t_l$,$J_p$\} $|$ $\forall t_l \in n_i$ and $\exists J_p \in J$, the pin $t_l$ resides on a segment $s_k$ associated with the junction $J_p$\}. As before, we calculate the weight of a pin-junction edge $e_{lp}$ as:
\begin{equation}
\label{eqn1b}
wt(e_{lp}) = \mathrm{distance}(t_l,J_p)/(1-p_k).
\end{equation}
and define $(1-p_k)$ as the \textit{usage penalty} on the edge weight for 
routing a net through the corresponding segment $s_k$.

\begin{lemma}
\label{lemma3b}
For a $t$-terminal net, the construction of its GSRG takes $O(t)$ time.
\end{lemma}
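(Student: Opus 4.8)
The plan is to analyze the construction cost of the GSRG by decomposing it into the two parts specified in its definition, namely the reused junction graph $G_j$ and the newly added pin-junction edges in $E_{lp}$. By Lemma \ref{lemma3a}, the junction graph $G_j$ itself is built in $O(n)$ time, but crucially it is constructed only once and shared across all nets as the common backbone; hence for a single net we should treat $G_j$ as already available and charge only the incremental work of extending it to $G_r^i$. This reframing is the key observation: the $O(t)$ bound is an \emph{incremental} cost per net, not the cost of building the routing graph from scratch.

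First I would observe that $V_r^i = V_j \cup \{t_l \mid t_l \in n_i\}$ adds exactly $t$ new pin vertices to the already-constructed vertex set $V_j$, which is clearly $O(t)$ work. Then I would turn to the edge set $E_r^i = E_j \cup E_{lp}$, where the only new edges are the pin-junction edges in $E_{lp}$. For each of the $t$ terminals $t_l$ of the net, one must locate the segment $s_k$ on which the pin resides and the associated junction $J_p$, then insert the edge $e_{lp} = \{t_l, J_p\}$ and compute its weight via Equation \eqref{eqn1b}. The weight computation is a constant-time arithmetic operation once the distance and the normalized usage $p_k$ are known, so each terminal contributes $O(1)$ edges and $O(1)$ work. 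Summing over all $t$ terminals yields $O(t)$ total for assembling $E_{lp}$, and combined with the $O(t)$ vertex insertions this gives the claimed $O(t)$ construction time.

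The main obstacle is justifying that locating the segment and junction for each pin takes only constant time per pin. This hinges on an assumption that a precomputed incidence structure (for instance, a lookup mapping each pin to its containing segment, or each segment to its bounding junctions) is available from the floorplan and junction-graph construction phase, so that no search over the $O(n)$ segments is needed at per-net time. I would therefore make this assumption explicit and argue that such incidence information is a natural by-product of building $G_j$ in Lemma \ref{lemma3a}, after which the per-pin lookup is a direct dereference. Provided this preprocessing is in place, the per-net incremental construction touches only the $t$ pins and their $O(t)$ incident edges, and the $O(t)$ bound follows; I would note that this is consistent with the overall algorithm carrying out the one-time $O(n)$ backbone construction separately from the repeated per-net extensions.
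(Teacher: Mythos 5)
Your proposal matches the paper's proof: both treat $G_j$ as an already-built backbone and count only the incremental work of adding $t$ pin vertices and the pin-junction edges (the paper notes each pin contributes exactly $2$ such edges, one to each junction bounding its segment, for $2t$ edges total), yielding $O(t)$. Your explicit assumption of a constant-time pin-to-segment lookup is a detail the paper leaves implicit, but it does not change the argument.
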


\begin{proof}
As defined, the GSRG $G_{r}^i$ = ($V_{r}^i$,$E_{r}^i$) for a given net $n_i$ with $t$ terminals is obtained by augmenting the junction graph $G_j$ = ($V_j,E_j$). In other words, $V_j$ is extended by $t$ terminals connected to $n_i$ in order to obtain $V_{r}^i$. It is also to be noted that each terminal resides on a segment $s_k$, having a pair of junctions ($J_p,J_q$) on either ends. Therefore, each terminal (pin) contributes $2$ pin-junction edges and thus total $2t$ edges to $G_{r}^i$ for all $t$ terminals. Hence, the construction of $G_{r}^i$ takes $O(t)$ time for each net.
\end{proof}

After routing a net $n_i$ successfully, we update $u_k$ for all such segments $s_k$ through which $n_i$ is routed. Subsequently, the weights of the edges in $G_j$ are updated before we route the subsequent net $n_{i+1}$. When congestion is about to occur in a given segment ($p_k = 1$), the weight of the corresponding edge in $G_{r}^i$ becomes \textit{Infinity}. No routing is possible through such segments and the relevant edges virtually disappear making $G_{r}^i$ more sparse after each iteration of routing. To summarize, the \textit{normalized usage} $p_k$ in this framework is  constrained to a maximum of $100\%$, thus restricting the number of routed nets ($u_k$) through a given segment to be no more than its capacity ($r_k$).

In order to extend this model for $M (\geq 1)$ metal layers, we keep a parameter called \textit{currLayer($s_k$)} associated with each segment $s_k$, initialized to $1$ and can go up to a maximum of $M$ metal layers. When congestion is about to occur in $s_k$ ($p_k = 1$), we increment \textit{currLayer($s_k$)} to the subsequent metal layer. Here the subsequent metal layer has different implication in \textit{(un)reserved layer model}; the subsequent layer can either be one layer above \textit{currLayer($s_k$)} or the next permitted layer based on the particular (horizontal/vertical) orientation of $s_k$ in the corresponding reserved layer model. This means that the resource $s_k$ has exhausted its entire capacity (i.e. $u_K = r_k$) for the current metal layer and is now ready for routing the nets through it for the next metal layer restricted by $M$. In this regard, the variation of $r_k$ across the metal layers (up to $M$) plays a significant role and thus directly impacts the routing completion of all the nets.
\begin{figure}[!ht]
\begin{subfigure}[b]{0.5\textwidth}
\centering{
\includegraphics[scale=0.47]{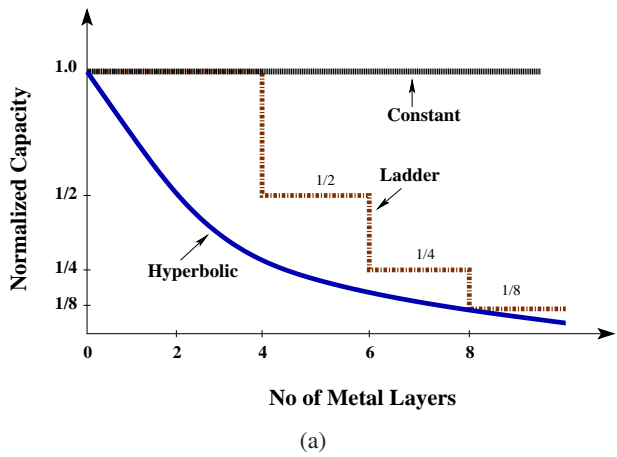}}
\caption{}
\end{subfigure}
\begin{subfigure}[b]{0.5\textwidth}
\centering{
\includegraphics[scale=0.6]{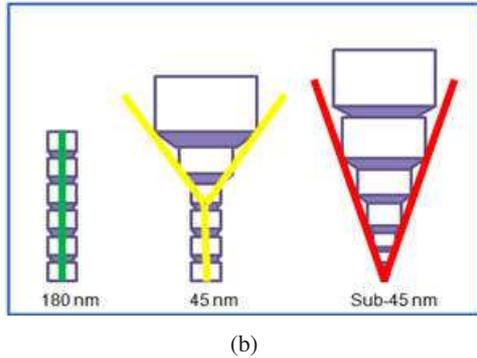}}
\caption{}
\end{subfigure}
\caption{(a) Normalized Capacity profile of a routing resource ($s_k$) vs number 
of metal layers ($M$), and (b) Metal layer variation across process nodes 
\cite{m_pitch}}
\label{fig:capscale}        
\end{figure}

In Figure \ref{fig:capscale} (a), we study different scenario of uniform as well as varying capacity profile for all the routing resources $s_k$ across the metal layers. In case of \textit{Uniform} profile, $r_k$ carries the same value across the metal layers. We consider two different varying capacity profiles, one is \textit{Hyperbolic} ($1/M$) type pattern, while the other being a 
\textit{Ladder} type pattern. In case of the former, $r_k$ is more aggressively scaled across the metal layers, the latter is a more realistic scenario that captures the latest trend of the metal pitch/width variation across the metal layers in the recent nanometer technologies (vide Figure \ref{fig:capscale} (b) \cite{m_pitch}). 

\subsection{Multi-terminal Nets and Their Two-terminal Decomposition}
\label{sub:mnet}
In a global routing framework, routing a $t (>2)$-terminal net is crucial and obtaining an efficient solution for minimal length is a hard problem. Several works have been done so far to obtain the best possible $t-1$ net segments for a $t$-terminal net such as \textit{Rectilinear Steiner Minimal Tree} (RSMT) topology proposed in \textit{FLUTE} \cite{cchu} based on a well defined grid structure known as \textit{Hanan grid} \cite{hana,sherw}. Since the proposed work is based on a gridless framework and the routing regions are aligned with the MIS/MDS regions, we cannot adopt any grid-based RSMT framework such as \textit{FLUTE} \cite{cchu}.

Therefore, we propose a new method for multi-terminal net decomposition suitable for the proposed global routing framework. We construct a complete undirected graph for a given $t$-terminal ($t>2$) net $n_i \in N$, $G_{c}^i$ = ($V_{c}^i$,$E_{c}^i$) such that $V_{c}^i$ = \{$t_k$\}, $\forall t_k \in n_i$ and $E_{c}^i$ = \{\{$t_j$,$t_k$\} $|$ $\forall t_j,t_k \in n_i$ and $t_j \neq t_k$\}. The weight of each edge $e_{jk}$  = \{$t_j$,$t_k$\} $\in E_{c}^i$ is computed as \textit{half the perimeter length} (HPWL) of the bounding box for each terminal pair ($t_i,t_j$) (vide  Figure \ref{fig:mnet} (a)). It is evident that $|V_{c}^i|$ = $O(t)$ and $|E_{c}^i|$ = $O(t^2)$. By employing $O(n^2)$ Prim's Minimum Spanning Tree (MST) algorithm \cite{cormen}, we obtain a \textit{minimum spanning tree} (MST) $T^i$ for $G_{c}^i$ having $t-1$ edges, i.e., $t-1$ valid $2$-terminal pairs. For each edge $e_{jk}$ = \{$t_j$, $t_k$\} $\in T^i$, we perform 2-terminal net routing by applying Dijkstra's single source shortest path algorithm \cite{cormen}. Once we obtain the routing for all such terminal pairs, we obtain the \textit{Steiner points} by identifying the common routing segments as illustrated by an example in Figure \ref{fig:mnet}. 
\begin{figure}
\centering{
\includegraphics[scale=0.3]{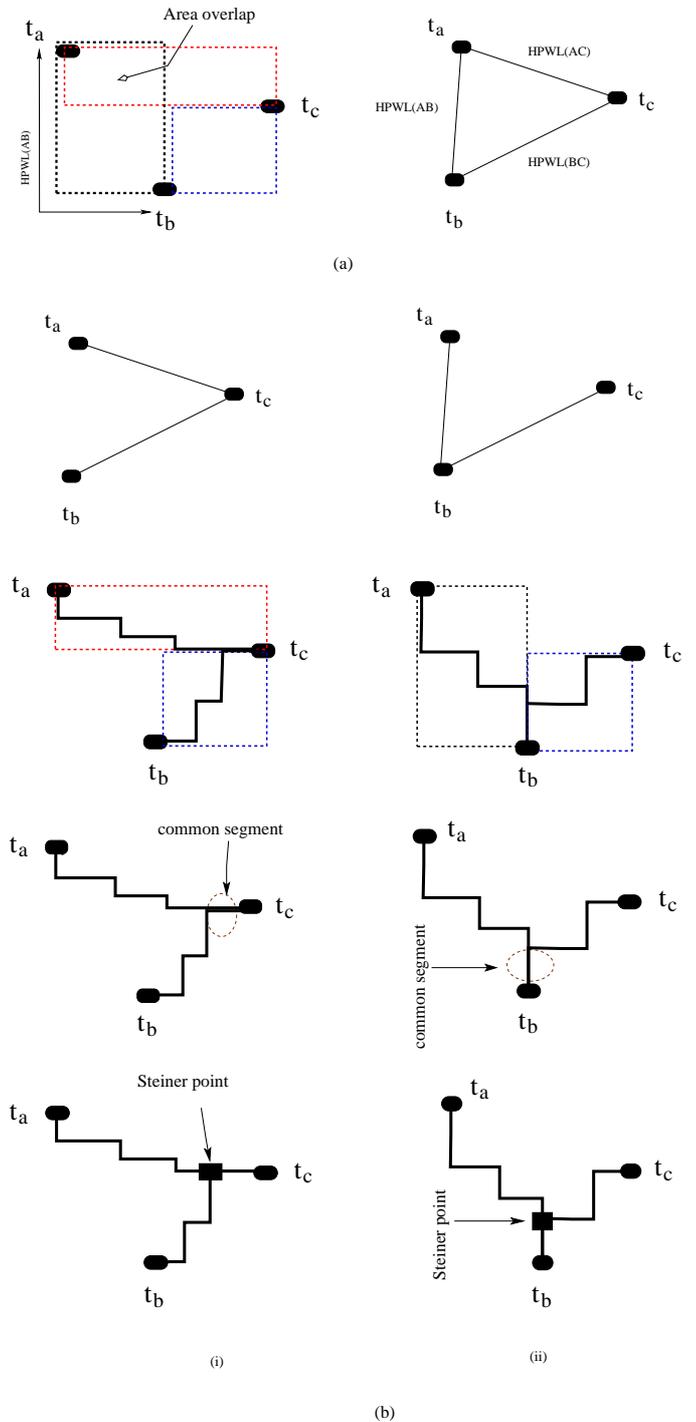}}
\caption{Multi-terminal net decomposition and Staircase Minimal Steiner Tree \label{fig:mnet}}
\end{figure}

We consider an example of a $3$-terminal net $n_1$ with terminals $\{t_a,t_b,t_c\}$ to illustrate the proposed $2$-terminal net decomposition as shown in Figure \ref{fig:mnet}. In this case, $G_{c1}$, a $3$-clique, has $3$ vertices $\{t_a,t_b,t_c\}$, and $3$ edges, namely $\{t_a,t_b\}$, $\{t_b,t_c\}$ and $\{t_a,t_c\}$, along with their corresponding edge weights (vide Figure \ref{fig:mnet} (a)). As shown in Figures \ref{fig:mnet} (b)-(i) and (b)-(ii), only one of the instances of minimum spanning tree $T_{c1}$ is greedily obtained by the said MST algorithm as the final solution. 

Depending on a specific $T_{c1}$ thus obtained, the proposed $2$-terminal net segment routing, presented in the next section, for each valid terminal pair is applied. Once the routing for all the designated terminal pairs are obtained, we identify the \textit{Steiner points} similar to the state-of-the-art grid-based multi-terminal net decomposition methods (FLUTE \cite{cchu}), as illustrated in 
Figure \ref{fig:mnet} (b). The main difference is that this work is based on a gridless framework using monotone staircase regions as the routing resources. This topology may be termed as \textit{Staircase Minimal Steiner Tree} (SMST).

\subsection{STAIRoute: the proposed algorithm}
We present the proposed global routing algorithm \textit{STAIRoute} using monotone staircase regions in Algorithm \ref{alg:1}. This algorithm takes two inputs, namely a ordered set of nets $N$ and the junction graph $G_j$ as defined in Section \ref{sub:jg}. For each net $n_i \in N$, the GSRG $G_{r}^i$ is constructed and a routing path for the net $n_i$ is obtained by applying a shortest path algorithm on $G_{r}^i$. We have implemented $O(n^2)$ Dijkstra's shortest path algorithm \cite{cormen}, namely \textit{DijkstraSSP()}, presented in Algorithm \ref{alg:1}. 

\begin{figure*}[!ht]
\centering{
\includegraphics[scale=0.42]{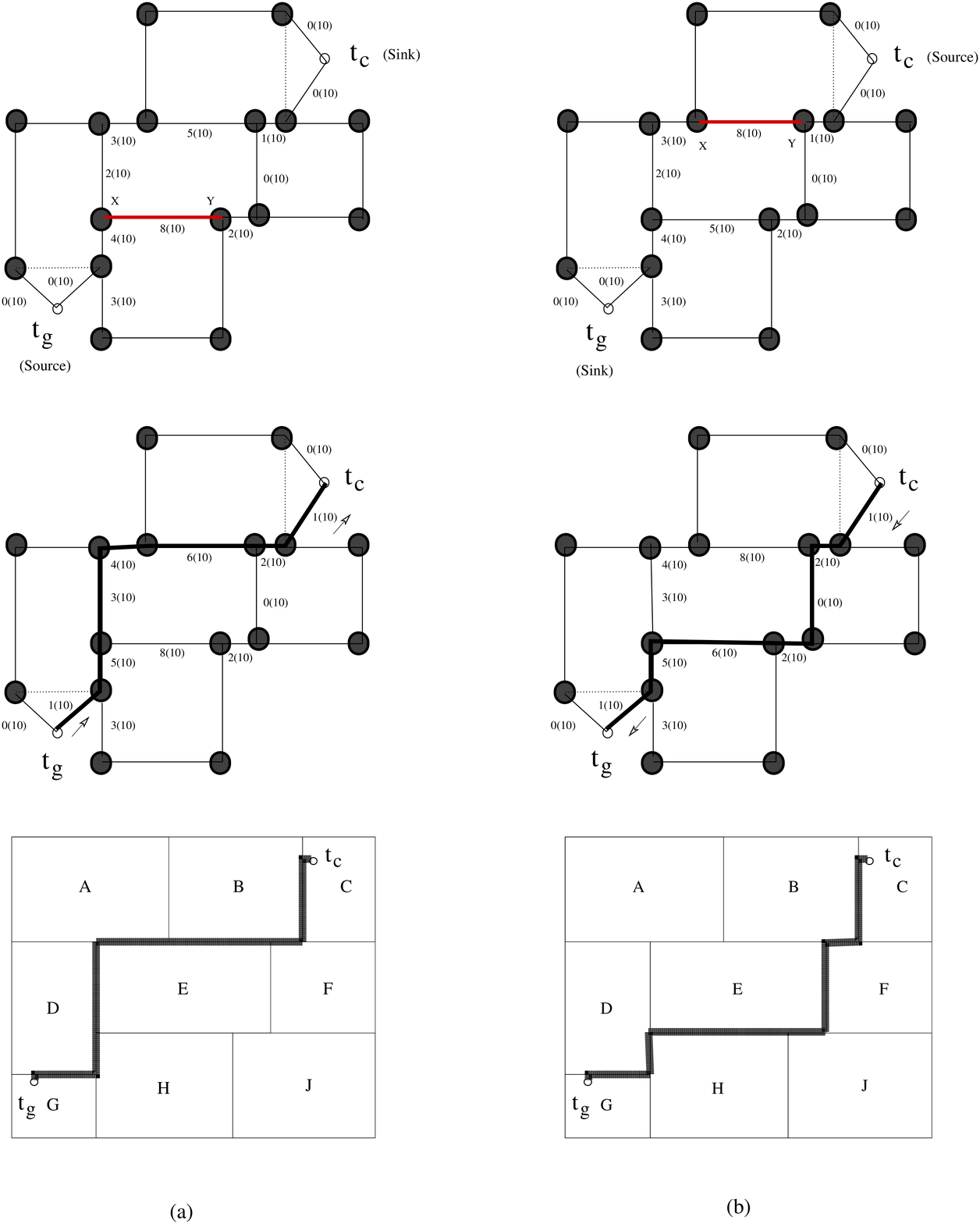}}
\caption{Exploring a routing path based on: (a) Forward Search, and (b) Backward Search \label{fig:routedir}}
\end{figure*}

For each $2$-terminal net (segment), we consider two cases of identifying the source vertex between a pair of terminals before we apply the shortest path algorithm as: 
\begin{enumerate}
 \item the minimum $x$ coordinate (or the minimum $y$ coordinate in case both the terminals have the same $x$ coordinate)
 \item the maximum $x$ coordinate (or the maximum $y$ coordinate in case both have the same $x$ coordinate)
\end{enumerate}
and the procedure \textit{IdentifySource()} in Algorithm \ref{alg:1} is used for that purpose.

We term them as \textit{Forward} (FWD) and \textit{Backward} (BACK) search respectively. In Figure \ref{fig:routedir} (a) and (b), we illustrate the respective cases for a $2$-terminal net \{$t_g,t_c$\} and show that both search procedures can potentially give different routing paths. One may have a potentially better solution than the other in terms of routability, congestion scenario along with wirelength, and finally via count. The variation in wirelength due to FWD (BACK) search arises when certain resource(s) along the respective paths are fully utilized in a given metal layer; with the possibility of switching to the next available metal layer if permitted, leads to increase in the via count. Otherwise, the routing path is detoured beyond the bound box of the terminals, leading to increase in length. As long as the alternatives paths remain confined within the bounding box of the terminals, there is no variation among the respective wirelengths. 
\begin{algorithm}[!ht]
\caption{STAIRoute}
\label{alg:1}
\begin{algorithmic}
\REQUIRE{$G_j(V_j$,$E_j$), Ordered nets $N$}
\ENSURE{Global routing for each $t$-terminal ($t\geq2$) nets ($n_i \in N$) with 
100\% routability and usage $\leq$ 100\%}
\newline
\FORALL{sorted nets $n_i \in N$}
\STATE $G_{r}^i$ = ConstructGSRG($G_j$,$n_i$)\\
\IF {Netdegree($n_i$) == 2}
\STATE /*Netdegree($n_i$) = Number of terminals in $n_i$*/\\
\STATE Source = IdentifySource($n_i$.terminals) /*for Forward or Backward search 
(vide Fig. \ref{fig:routedir})*/\\
\STATE Path(Source,Sink) = DijkstraSSP($G_{r}^i$,Source)
  \IF{There exists a routing path from Source to Sink}
  \STATE $n_i$ is routed.
  \STATE Update $u_k$ for the respective segments.
  \STATE WireLength($n_i$)
  \STATE ViaCount($n_i$)
  \ELSE
  \STATE Routing $n_i$ is a failure and continue for $n_{i+1}$
  \ENDIF
\ELSE
\STATE $G_{c}^i$ = ConstructNodeClique($n_i$.terminals)\\
\STATE $T^i$ = ObtainMST($G_{c}^i$) /*described in Section \ref{sub:mnet}*/
  \FORALL{edges $(t_j,t_k) \in T^i$}
  \STATE Source = IdentifySource($t_j,t_k$) /*for Forward or Backward search 
(vide Fig. \ref{fig:routedir})*/\\
  \STATE Path(Source,Sink) = DijkstraSSP($G_{r}^i$,Source)
    \IF {There exists a routing path from Source to Sink}
    \STATE $2$-terminal net segment is routed; calculate the segment length.
    \STATE update the $u_k$ for the respective segments.
    \ELSE
    \STATE Routing $n_i$ is a failure and continue for $n_{i+1}$
    \ENDIF
  \ENDFOR
\STATE Identify the \textit{Steiner Point(s)} /*vide Fig. \ref{fig:mnet}*/\\
\STATE WireLength($n_i$)
\STATE ViaCount($n_i$)
\ENDIF 
\ENDFOR
 \end{algorithmic}
\end{algorithm}

In the unreserved layer model, routing a net may incur a number of vias due to change in metal layer used to route through a set of routing resources. It does not depend on their vertical/horizontal orientation. For reserved layer model, each direction (vertical or horizontal) supports only a set of designated routing layers, e.g., horizontal layers are routed through odd numbered layers only. The number of vias along a routing path depends on the number of bends in it, i.e., the alternating (horizontal/vertical) orientation of the contiguous routing resources, for a minimum change of one metal layer among the resources along that path \cite{karb3,royj}. Congestion in these regions may also contribute to the number of vias along a routing path, in both the cases. From the example shown in Figure \ref{fig:viacount} (a) and (b), we notice that the routing path for a given net ($t_g$, $t_c$) needs $3$ and $5$ vias for FWD and BACK searches respectively. Therefore, depending on the netlist and the floorplan topology of a given circuit, one method may dominate over the other. This method can be extended to $t$ ($>2$)-terminal nets, since we decompose those nets using the method stated earlier into $2$-terminal net segments. A better routing path for each of the resulting net segments can be obtained while employing either of the search procedures at a time.
\begin{figure*}[!ht]
\centering{
\includegraphics[scale=0.65]{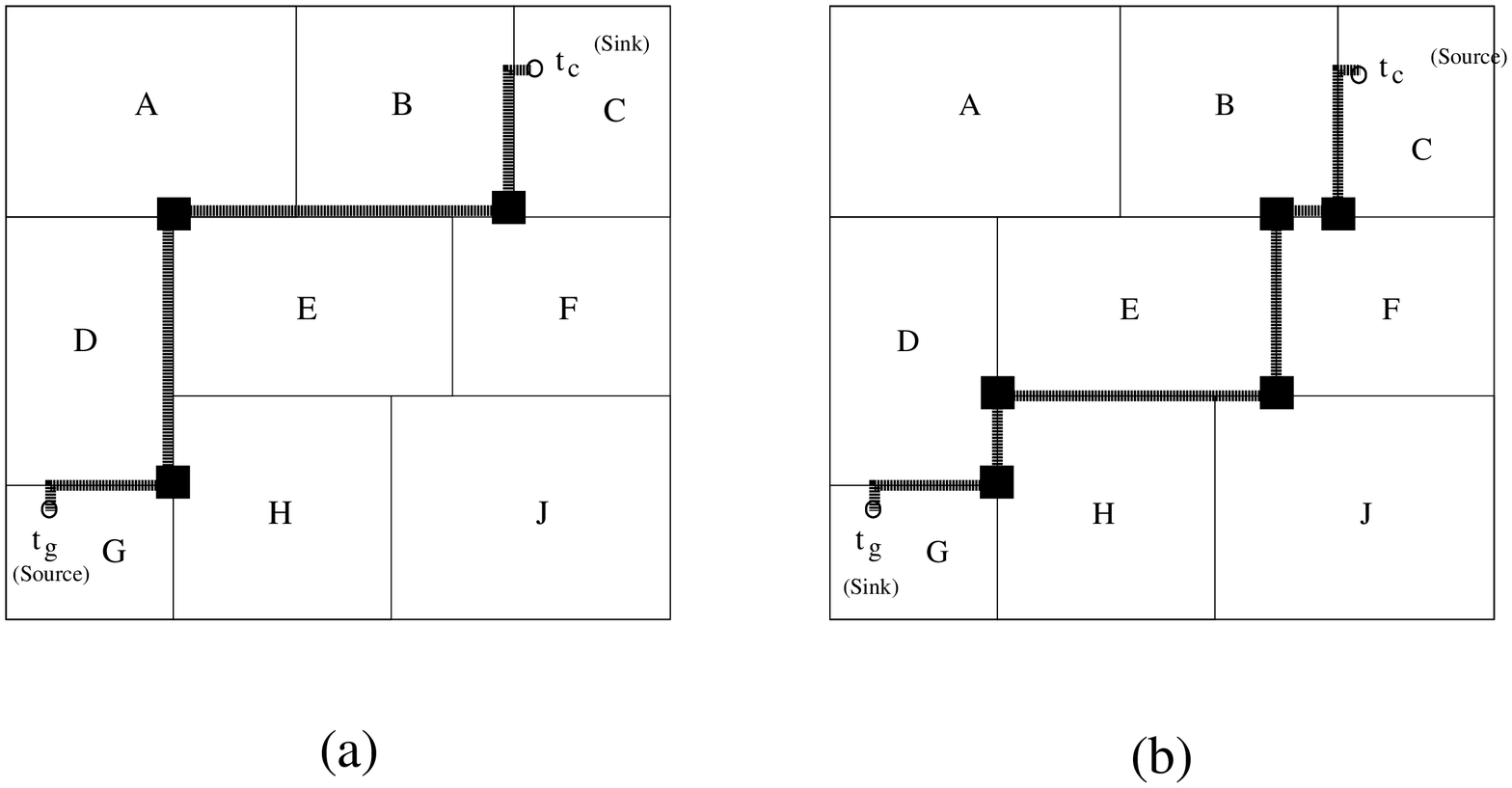}}
\caption{Impact of search direction on via count along a routing path: (a) Forward Search with $3$ vias, and (b) Backward Search with $5$ vias \label{fig:viacount}}
\end{figure*}

Before the routing procedure starts, it is very important that the all nets ($n_i \in N$) are \textit{ordered} based on their \textit{half perimeter wire length} (HPWL), and the number of terminals in each net, i.e., (\textit{Netdegree}). The net ordering (priority) is determined based on the non-decreasing order of HPWL first and then Netdegree. A net with smaller HPWL and then Netdegree, has the precedence over other nets. The aim is to ensure that the shorter (local) nets are routed before the longer ones so as to avoid congestion in the routing resources as well as have a uniform routing distribution across the layout of the design without incurring too much detours. Here detoures imply that the routing is done along a non-monotone path such as U-shaped paths. We illustrate the working of this algorithm for $t$ ($\geq 2$)-terminal nets in Figure \ref{fig:gsrg}.

\begin{theorem}
\label{theorem1}
Given a floorplan having $n$ blocks and $k$ nets having at most $t$-terminals ($t\geq2$), the algorithm STAIRoute takes $O(n^2kt)$ time.
\end{theorem}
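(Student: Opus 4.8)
The plan is to bound the cost of the outer loop of Algorithm~\ref{alg:1}, which iterates once over each of the $k$ nets, by analysing the work done for a single net and then multiplying by $k$. The guiding observation is that for every net the dominant cost comes from the repeated calls to the shortest-path routine \textit{DijkstraSSP()} on the corresponding GSRG, so the whole argument reduces to three sub-tasks: (i) bounding the size of the GSRG, (ii) counting how many times Dijkstra is invoked per net, and (iii) verifying that every other operation is subsumed by the Dijkstra cost.

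First I would bound the GSRG. By Lemma~\ref{lemma2} the junction graph $G_j$ has $|V_j| = 2n-2$ vertices and, as noted, $|E_j| = 3n-7$ edges. By Lemma~\ref{lemma3b}, \textit{ConstructGSRG} augments $G_j$ with the $t$ terminals of $n_i$ together with $2t$ pin--junction edges, so $|V_r^i| = 2n-2+t$ and $|E_r^i| = 3n-7+2t$; under the mild assumption $t = O(n)$ both are $O(n)$. Consequently each call to the $O(|V|^2)$ implementation of Dijkstra's algorithm on $G_r^i$ costs $O(n^2)$.

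Next I would count the Dijkstra invocations per net. A $2$-terminal net triggers a single call, hence $O(n^2)$. For a $t$-terminal net, \textit{ConstructNodeClique} builds the complete graph $G_c^i$ on $t$ vertices in $O(t^2)$ time and \textit{ObtainMST} (Prim's algorithm) returns a spanning tree $T^i$ with exactly $t-1$ edges in $O(t^2)$ time; the inner loop then routes each of these $t-1$ two-terminal segments with one Dijkstra call, giving $O(t)$ calls and hence $O(t\,n^2)$. The remaining bookkeeping --- GSRG construction ($O(t)$ by Lemma~\ref{lemma3b}), identifying the Steiner points, and the \textit{WireLength}/\textit{ViaCount} passes --- is linear in the length of the routed paths and therefore $O(n)$ per segment, i.e. $O(t\,n)$ in total. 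Since $n^2 \geq t$ for any realistic instance, both the $O(t^2)$ and $O(t\,n)$ terms are dominated by $O(t\,n^2)$, so processing one net costs $O(t\,n^2)$.

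Finally, summing over the $k$ nets yields $k \cdot O(t\,n^2) = O(n^2 k t)$, as claimed. The only delicate point I anticipate is sub-task~(i): one must justify that the terminal vertices do not inflate the Dijkstra cost, i.e. that $|V_r^i| = O(n)$, which rests on the assumption $t = O(n)$ (equivalently, that no net has more terminals than the junction graph has vertices). With that in hand the remaining accounting is routine, and the nested structure $k$ nets $\times$ $O(t)$ segments $\times$ $O(n^2)$ per shortest-path computation delivers the stated bound.
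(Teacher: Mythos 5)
Your proposal is correct and follows essentially the same route as the paper: $O(t)$ GSRG construction, $O(t^2)$ for the clique and Prim's MST, and $t-1$ invocations of the $O(n^2)$ Dijkstra routine, summed over $k$ nets to give $O(n^2kt)$. The one point you flag as an assumption --- that $t = O(n)$ so the terminal vertices do not inflate the Dijkstra cost --- is handled identically in the paper, which observes that a net can connect to at most all $n$ blocks, so $t = n$ in the worst case.
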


\begin{proof}
From Lemma \ref{lemma3b}, we say that GSRG construction takes $O(t)$ time. For each 2-terminal net routing, finding the Source vertex takes $O(n)$ and our 
implementation of Dijkstra's single source shortest path algorithm (DijkstraSSP) takes $O(n^2)$. 

Again, for  $t$-terminal ($t>2$) nets, computing $G_{c}^i$ takes $O(t^2)$ and our implementation of Prim's algorithm takes $O(t^2)$. For each terminal pair ($t_i,t_j$), we obtain the shortest path using DijkstraSSP in $O(n^2)$ time. Thus, for each $t(\geq 2)$ terminal net, the time complexity is $O(t + t^2 + n^2t)$, i.e., $O(n^2t)$, since a given net may be connected to all $n$ blocks resulting in $t = n$ in the worst case. Therefore, the overall worst case time complexity for all $k$ nets is $O(n^2kt)$.
\end{proof}

\section{Experimental Results}
\label{sec:result}
We have implemented the proposed algorithm \textit{STAIRoute} in C programming language and run on a 64bit Linux platform powered by Intel Core2 Duo (1.86GHz) and 2GB RAM. We used source code for top-down hierarchical monotone staircase bipartitioning algorithm implemented by \cite{karb} to obtain the BAG and MSC tree data structure. We used MCNC/GSRC hard floorplanning benchmark circuits as given in Table \ref{tab:bench}. In order to test our algorithm, four different instances for each of the benchmarks were generated with a random seed using \textit{Parquet} \cite{adya,parque} tool. For a given circuit, the \textit{best case} (BC) and the \textit{worst case} (WC) instances among a set of floorplan topologies are solely designated in the context of total \textit{half perimeter wire length} (HPWL) of all the nets, as the ones with the smallest and the largest HPWL respectively. 
\begin{table}[!ht]
\centering{
\processtable{Floorplanning Benchmarks: MCNC and GSRC Circuits \label{tab:bench}}
{\begin{tabular*}{0.6\textwidth}{@{\extracolsep{\fill}}|c|r|r|r|r|}\toprule
{Suite} & {Circuit} & {\#Blocks} & {\#Nets}  &  {Avg. Net-deg}\\ 
\midrule
{MCNC} & apte & 9 & 44 & 3.500\\
& hp & 11 & 44 & 3.545\\  
& xerox & 10 & 183 & 2.508\\ 
& ami33 & 33 & 84 & 4.154\\  
& ami49 & 49 & 377 & 2.337\\  \hline
{GSRC} & n10 & 10 & 54 & 2.129\\  
& n30 & 30 & 147 & 2.102\\ 
& n50 & 50 & 320 & 2.112\\  
& n100 & 100 & 576 & 2.135\\ 
& n200 & 200 & 1274 & 2.138\\ 
& n300 & 300 & 1632 & 2.161\\ \hline
\end{tabular*}}{}}
\end{table}

To the best of our knowledge, there exists no such early routability estimator at the floorplanning stage. Therefore, we believe that it is unfair to have a direct comparison of our early global routing results on a given floorplan with the existing post-placement global routing results specially by those using pattern routing such as \cite{kast, zcao, ychang}, as they are in the different scopes of the flow. Therefore, we are restricted to compare our results with the state-of-the-art Steiner tree base routing topology estimator tool FLUTE \cite{cchu}.

We present our experiments by running the proposed global routing method using HV reserved layer model restricted up to eight metals layers. We consider BC and WC floorplan topologies for each of the circuits. In these experiments, we refer to Figure \ref{fig:capscale} (a) for $3$ different capacity scaling profiles, and also refer to Figure \ref{fig:routedir} for two possible directions for exploring the possible routing paths. We iterate that all the results presented here correspond to $100\%$ routability and restricted to \textit{Under-Congestion} region of Figure \ref{fig:cong} that ensures no congestion in any routing resource. Otherwise, $100\%$ routability can not be achieved.

Following are the configurations considered for conducting the experiments on the benchmark circuits given in Table \ref{tab:bench}:
\begin{enumerate}
 \item \textit{Forward} search method with \textit{No Capacity Scaling} ($FCN$)
 \item \textit{Forward} search method with \textit{Hyperbolic Capacity Scaling} ($FCH$)
 \item \textit{Forward} search method with \textit{Ladder type Capacity Scaling} ($FCL$) 
 \item \textit{Backward} search method with \textit{No Capacity Scaling} ($BCN$)
 \item \textit{Backward} search method with \textit{Hyperbolic Capacity Scaling} ($BCH$)
 \item \textit{Backward} search method with \textit{Ladder type Capacity Scaling} ($BCL$) 
\end{enumerate}

In Figure \ref{fig:n300-plot}, we showcase the routing results for $n300$ for all six run configurations and two different floorplan instances, namely BC and WC. While studying these plots, we notice that the forward (backward) search with hyperbolic scaling $FCH$ ($BCH$) gives the worst results as compared to the other two configurations \{$FCN,FCL$\} (\{$BCN,BCL$\}) both in terms of routed wirelength and via count, both in case of BC and WC floorplans for a given circuit. This is due to the fact that the hyperbolic profile is the most stringent profile among the other profiles as depicted in Figure \ref{fig:capscale} (a). We also notice similar pattern for runtime as well as presented in Figure \ref{fig:n300-plot} (c).
\begin{figure}[!ht]
\begin{subfigure}[b]{0.48\textwidth}
\centering{
\includegraphics[scale=0.21]{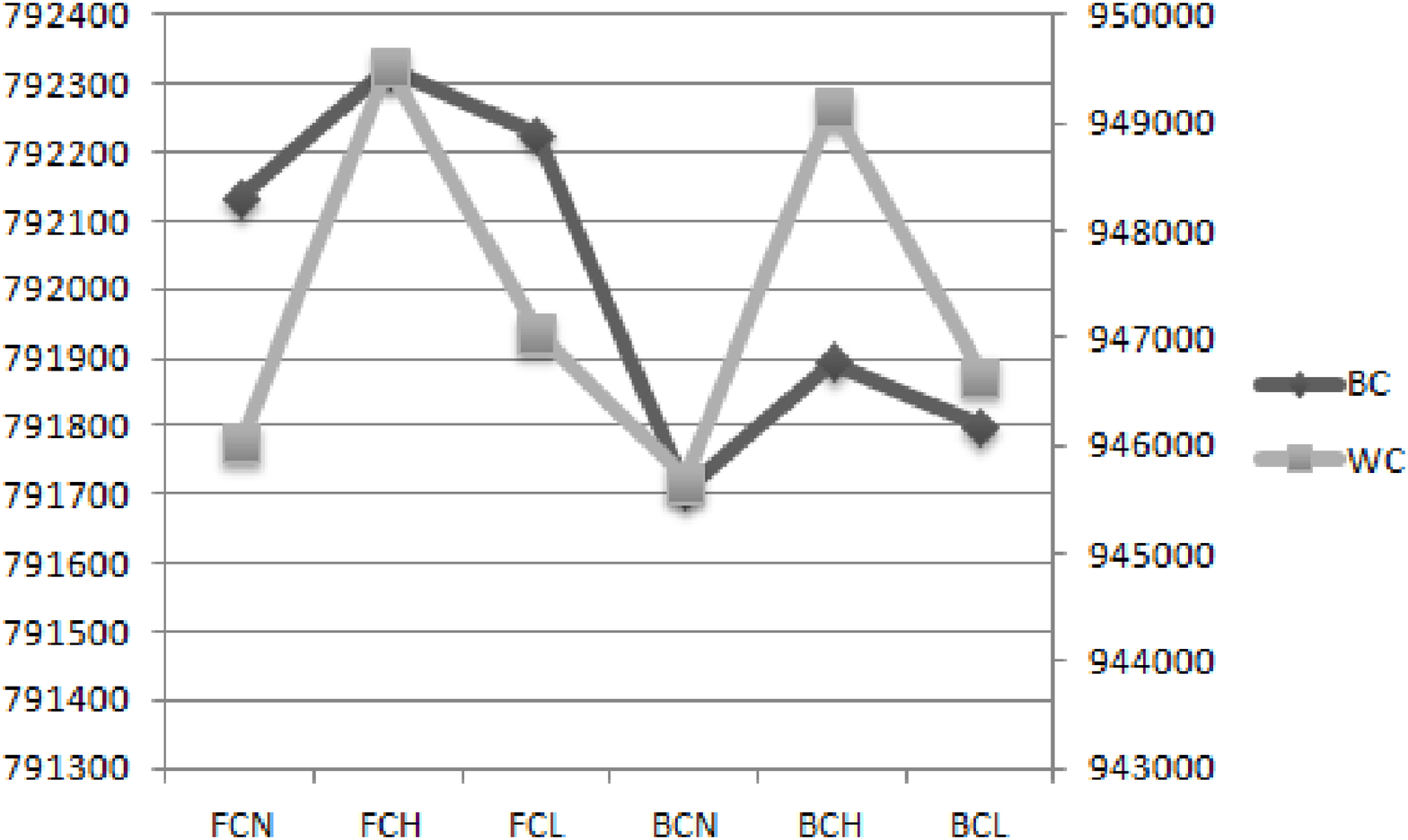}}
\caption{}
\end{subfigure}
\begin{subfigure}[b]{0.48\textwidth}
\centering{
\includegraphics[scale=0.21]{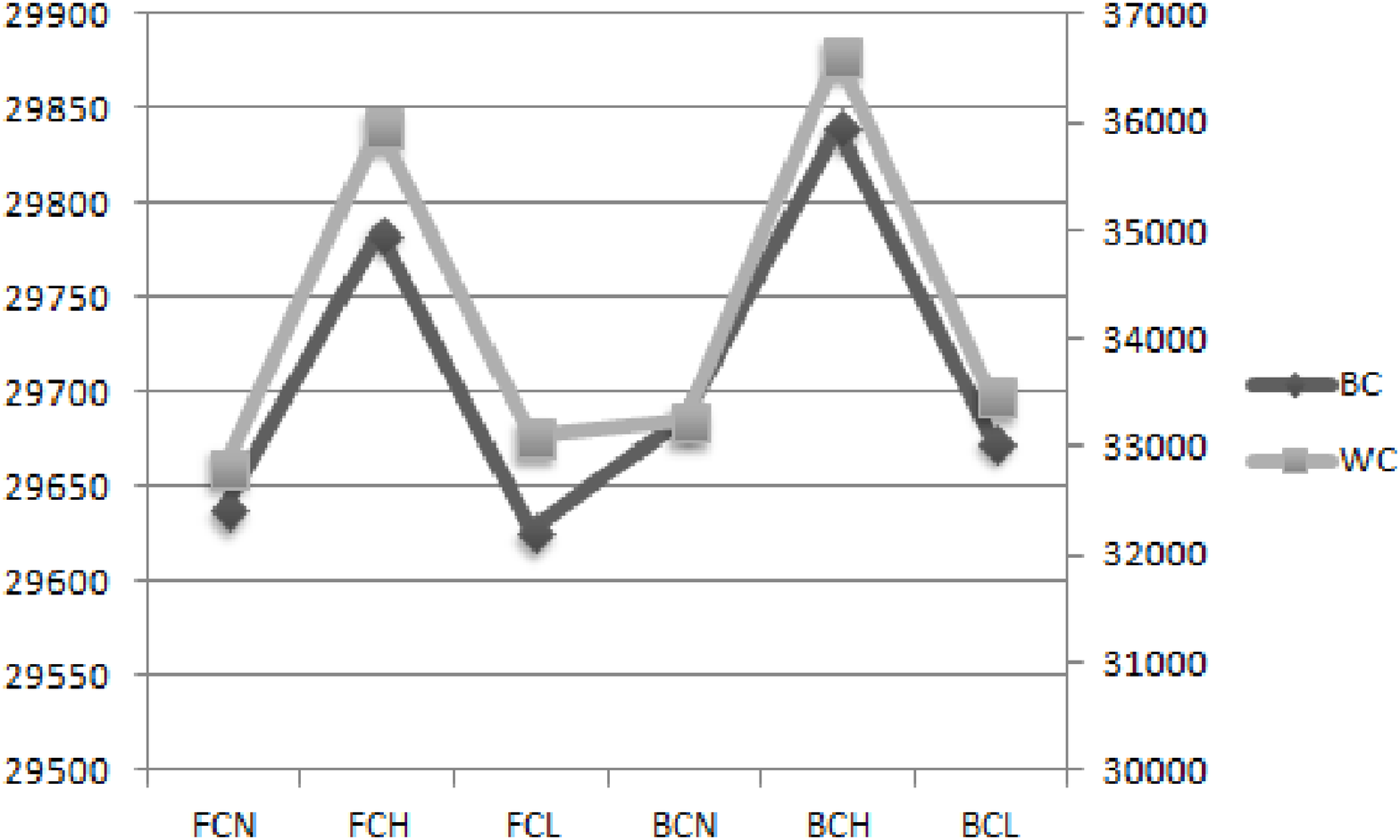}}
\caption{}
\end{subfigure}
\begin{subfigure}[b]{0.48\textwidth}
\centering{
\includegraphics[scale=0.21]{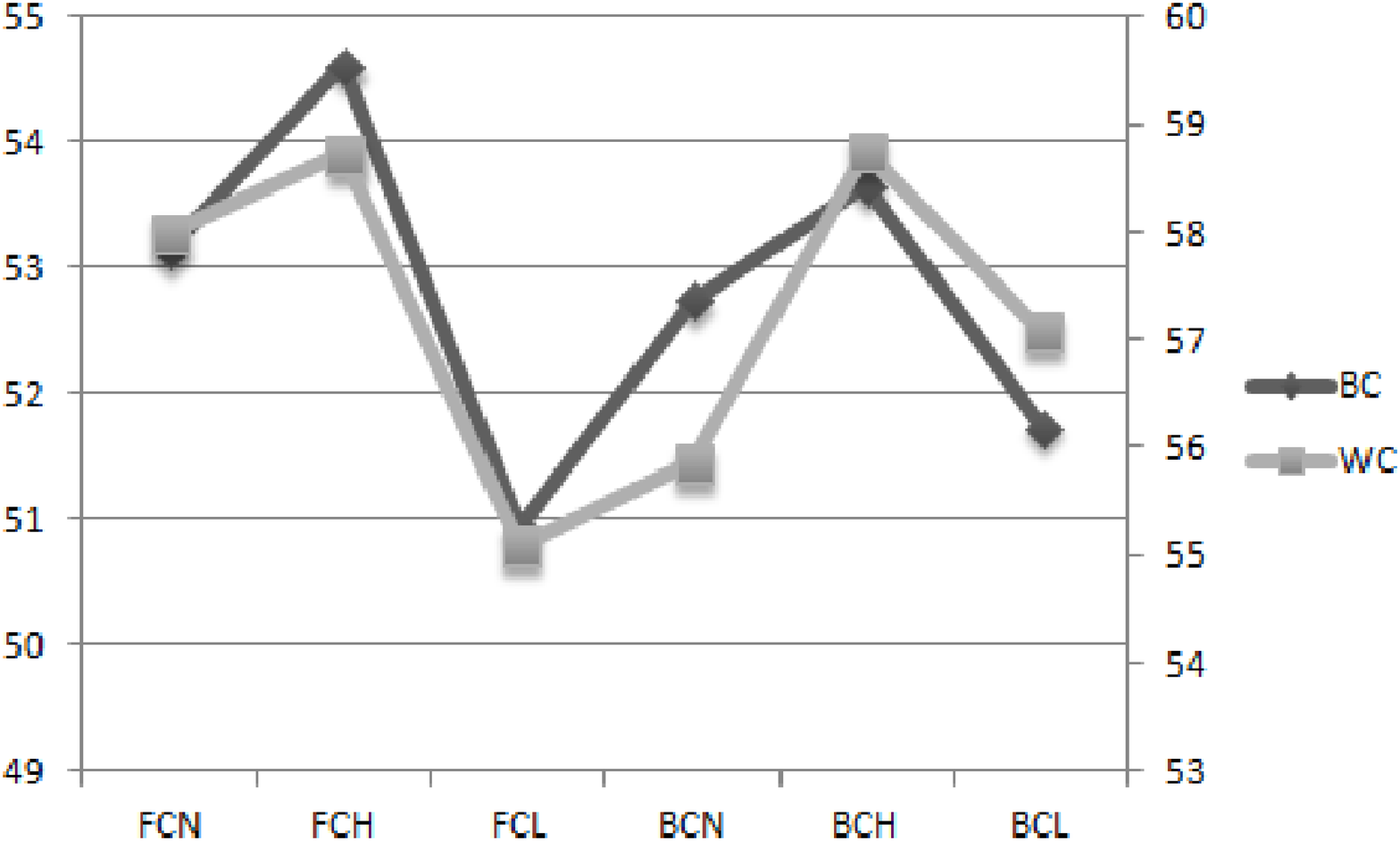}}
\caption{}
\end{subfigure}
\caption{Plot for \textit{best case} (BC) and \textit{worst case} (WC) floorplan instances of $n300$ vs. different run configurations: (a) Wirelength  ($\mu$m), (b) Via Count, and (c) Runtime(sec) \label{fig:n300-plot}}
\end{figure}

Next, we focus on the corresponding results obtained for the remaining configurations \{$FCN,FCL$\} (\{$BCN,BCL$\}) for both BC and WC floorplan topologies. Figure \ref{fig:n300-plot} (a) shows that $FCN$ ($BCN$) gives better wirelength against $FCL$ ($BCL$) both in BC and WC. Although it reflects a similar trend in the respective via count for the WC topology, $FCL$ ($BCL$) has better via count as compared to $FCN$ ($BCN$) in BC (Figure \ref{fig:n300-plot} (b)). We conduct another set of comparison for wirelength and via count between $FCN$ and $BCN$ ($FCL$ and $BCL$) for both BC and WC. Although backward search produces better wirelength as compared to that in forward search method, it incurs more vias to route a set of nets than its counterpart. This clearly shows that an early routing solution not only depends on the search direction and the capacity profiles, but also on different floorplan instances of the same circuit.  
\begin{table*}[!ht]
\fwprocesstable{Wirelength for different run configurations against FLUTE \cite{cchu} length: for BC floorplan instances only \label{tab:length-comp-bc}}
{\begin{tabular*}{\textwidth}{@{\extracolsep{\fill}}|c|r|r|r|r|r|r|r|}\toprule
      \textbf{Circuit} & \textbf{$FCN$($\mu$m)} & \textbf{$FCH$($\mu$m)} & \textbf{$FCL$($\mu$m)} & \textbf{$BCN$($\mu$m)} & \textbf{$BCH$($\mu$m)} & \textbf{$BCL$($\mu$m)} & \textbf{FLUTE($F$)} \\
       & \textbf{($FCN/F$)} & \textbf{($FCH/F$)} & \textbf{($FCL/F$)} & \textbf{($BCN/F$)} & \textbf{($BCH/F$)} & \textbf{($BCL/F$)} & ($\mu$m)  \\
      \midrule
      apte & 398137.031 & 398411.250 & 398137.031 & 396034.313 & 396308.563 & 396034.313 & 338628.000 \\

       & (1.1757) & (1.1765) & (1.1757) & (\textbf{1.1695}) & (1.1703) & (\textbf{1.1695}) &  \\
\hline
      hp & 201996.672 & 203060.359 & 201996.672 & 211536.609 & 210086.344 & 211536.609 & 123716.000 \\

       & (\textbf{1.6327}) & (1.6413) & (\textbf{1.6327}) & (1.7099) & (1.6981) & (1.7099) &  \\
\hline
      xerox & 716916.688 & 717291.625 & 716916.688 & 710575.375 & 710575.375 & 710575.375 & 633533.000 \\

       & (1.1316) & (1.1322) & (1.1316) & (\textbf{1.1216}) & (\textbf{1.1216}) & (\textbf{1.1216}) &  \\
\hline
      ami33 & 111126.102 & 111563.070 & 111126.102 & 111481.008 & 111896.906 & 111481.008 & 92330.000 \\

       & (\textbf{1.2036}) & (1.2083) & (\textbf{1.2036}) & (1.2074) & (1.2119) & (1.2074) &  \\
\hline
      ami49 & 1925760.875 & 2009223.625 & 1925760.875 & 1926176.750 & 2010630.125 & 1926176.750 & 1608746.000 \\

       & (\textbf{1.1971}) & (1.2489) & (\textbf{1.1971}) & (1.1973) & (1.2498) & (1.1973) &  \\
\hline
      n10 & 19837.000 & 19837.000 & 19837.000 & 18497.500 & 18497.500 & 18497.500 & 16626.000 \\

       & (1.1931) & (1.1931) & (1.1931) & (\textbf{1.1126}) & (\textbf{1.1126}) & (\textbf{1.1126}) &  \\
\hline
      n30 & 59585.000 & 59585.000 & 59585.000 & 58761.500 & 58761.500 & 58761.500 & 49370.000 \\

       & (1.2069) & (1.2069) & (1.2069) & (\textbf{1.1902}) & (\textbf{1.1902}) & (\textbf{1.1902}) &  \\
\hline
      n50 & 151604.000 & 152119.000 & 151851.000 & 150741.000 & 151256.000 & 150988.000 & 125018.000 \\

       & (1.2127) & (1.2168) & (1.2146) & (\textbf{1.2058}) & (1.2099) & (1.2077) &  \\
\hline
      n100 & 251455.500 & 252647.500 & 251476.500 & 250771.000 & 251917.000 & 250792.000 & 212112.000 \\

       & (1.1855) & (1.1911) & (1.1856) & (\textbf{1.1823}) & (1.1877) & (1.1824) &  \\
\hline
      n200 & 429854.500 & 430043.500 & 429923.500 & 428987.000 & 429128.000 & 429056.000 & 381021.000 \\

       & (1.1282) & (1.1287) & (1.1283) & (\textbf{1.1259}) & (1.1263) & (1.1261) &  \\
\hline
      n300 & 792135.000 & 792323.000 & 792229.000 & 791707.500 & 791895.500 & 791801.500 & 699006.000 \\

       & (1.1332) & (1.1335) & (1.1334) & (\textbf{1.1326}) & (1.1329) & (1.1328) &  \\
\botrule
\end{tabular*}}{}
\end{table*}

In Table \ref{tab:length-comp-bc} (\ref{tab:length-comp-wc}), we summarize the wirelength obtained for all the circuits in all the configurations for the respective BC (WC) topologies and compare them with the corresponding FLUTE \cite{cchu} length. For a given configuration, the corresponding wirelength is accompanied by its ratio of wirelength and FLUTE length, e.g., $FCN/F$ in the bracket below it and the best ratio(s) are highlighted. It is evident that from these results that, except for the circuits with $100$ or more blocks, the length ratio pairs $FCN/F$ and $FCL/F$ ($BCN/F$ and $BCL/F$) have little variation and also shows that $BCN$ yields the best wirelength of a given circuit with respect to the corresponding FLUTE length for most of the circuits. The best results among all the run configurations are highlighted in bold letters.
\begin{table*}[!ht]
\fwprocesstable{Wirelength for different run configurations against FLUTE \cite{cchu} length: for WC floorplan instances only \label{tab:length-comp-wc}}
{\begin{tabular*}{\textwidth}{@{\extracolsep{\fill}}|c|r|r|r|r|r|r|r|}\toprule
 \textbf{Circuit} & \textbf{$FCN$($\mu$m)} & \textbf{$FCH$($\mu$m)} & \textbf{$FCL$($\mu$m)} & \textbf{$BCN$($\mu$m)} & \textbf{$BCH$($\mu$m)} & \textbf{$BCL$($\mu$m)} & \textbf{FLUTE($F$)} \\
       & \textbf{($FCN/F$)} & \textbf{($FCH/F$)} & \textbf{($FCL/F$)} & \textbf{($BCN/F$)} & \textbf{($BCH/F$)} & \textbf{($BCL/F$)} & ($\mu$m)  \\
\midrule
   apte & 450376.469 & 450376.469 & 450376.469 & 438912.781 & 439198.781 & 438912.781 & 389806.000 \\

       & (\textbf{1.1554}) & (\textbf{1.1554}) & (\textbf{1.1554}) & (1.1260) & (1.1267) & (1.1260) &  \\
\hline
      hp & 232782.313 & 232782.313 & 232782.313 & 230255.938 & 230255.938 & 230255.938 & 144993.000 \\

       & (1.6055) & (1.6055) & (1.6055) & (\textbf{1.5880}) & (\textbf{1.5880}) & (\textbf{1.5880}) &  \\
\hline
      xerox & 1542729.875 & 1590995.125 & 1542729.875 & 1511243.250 & 1558185.250 & 1511243.250 & 1391401.000 \\

       & (1.1088) & (1.1434) & (1.1088) & (\textbf{1.0861}) & (1.1199) & (\textbf{1.0861}) &  \\
\hline
      ami33 & 120903.578 & 120903.578 & 120903.578 & 118746.414 & 118969.086 & 118746.414 & 105025.000 \\

       & (1.1512) & (1.1512) & (1.1512) & (\textbf{1.1306}) & (1.1328) & (\textbf{1.1306}) &  \\
\hline
      ami49 & 1914369.375 & 1914369.375 & 1914369.375 & 1898528.125 & 1898528.125 & 1898528.125 & 1684114.000 \\

       & (1.1367) & (1.1367) & (1.1367) & (\textbf{1.1273}) & (\textbf{1.1273}) & (\textbf{1.1273}) &  \\
      \hline
      n10 & 24526.500 & 25116.500 & 24526.500 & 23707.500 & 24297.500 & 23707.500 & 20012.000 \\

       & (1.2256) & (1.2551) & (1.2256) & (\textbf{1.1847}) & (1.2141) & (\textbf{1.1847}) &  \\
\hline
      n30 & 74743.500 & 75105.500 & 74838.500 & 74151.500 & 74513.500 & 74246.500 & 59879.000 \\

       & (1.2482) & (1.2543) & (1.2498) & (\textbf{1.2384}) & (1.2444) & (1.2399) &  \\
\hline
      n50 & 187971.000 & 189752.000 & 188476.000 & 187197.000 & 188994.000 & 187702.000 & 158173.000 \\

       & (1.1884) & (1.1996) & (1.1916) & (\textbf{1.1835}) & (1.1949) & (1.1867) &  \\
\hline
      n100 & 277304.500 & 277950.500 & 277583.500 & 276545.000 & 277251.000 & 276824.000 & 238841.000 \\

       & (1.1610) & (1.1637) & (1.1622) & (\textbf{1.1579}) & (1.1608) & (1.1590) &  \\
\hline
      n200 & 836136.000 & 865590.000 & 837493.000 & 835535.500 & 864329.500 & 836928.500 & 749479.000 \\

       & (1.1156) & (1.1549) & (1.1174) & (\textbf{1.1148}) & (1.1532) & (1.1167) &  \\
\hline
      n300 & 946039.500 & 949547.500 & 947076.500 & 945688.500 & 949165.500 & 946675.500 & 830035.000 \\

       & (1.1398) & (1.1440) & (1.1410) & (\textbf{1.1393}) & (1.1435) & (1.1405) &  \\

\botrule
\end{tabular*}}{}
\end{table*}

In Table \ref{tab:via-comp}, we present the via count for all the run configurations for each circuit for BC (WC in brackets) instances. The results clearly point out the consequence of forward and backward search on via count. It is also evident that via count in case of $FCH$ ($BCH$) is the worst among all other two configurations, namely \{$FCN, FCL$\} (\{$BCN, BCL$\}), as we have seen in case of wirelength. There is little variation in via count for relatively smaller circuits in case of \{$FCN,FCL$\} (\{$BCN,BCL$\}) and becomes significant for relatively larger circuits. The best via count (highlighted with bold letters) for each circuit in both BC and WC (in brackets) is mostly obtained in case of \{$FCN, FCL$\}.
\begin{table}[!ht]
\scriptsize
\processtable{Via count for different run configurations: for BC (WC in brackets)\\ floorplan instance \label{tab:via-comp}}
    {\begin{tabular*}{0.7\textwidth}{@{\extracolsep{\fill}}|c|r|r|r|r|r|r|}\toprule
      \textbf{Circuit} & \textbf{$FCN$} & \textbf{$FCH$} & \textbf{$FCL$} & \textbf{$BCN$} & \textbf{$BCH$} & \textbf{$BCL$} \\
      \midrule
      apte & \textbf{404} & 508 & \textbf{404} & 412 & 504 & 412 \\
           & (\textbf{452}) & (660) & (\textbf{452}) & (460) & (656) & (460) \\
      \hline
      hp & \textbf{502} & 720 & \textbf{502} & 536 & 730 & 536 \\
         & (\textbf{430}) & (630) & (\textbf{430}) & (430) & (626) & (430) \\
      \hline
      xerox & \textbf{1190} & 1238 & \textbf{1190} & 1220 & 1272 & 1220 \\
            & (\textbf{1401}) & (2261) & (\textbf{1401}) & (1527) & (2369) & (1547) \\
      \hline
      ami33 & \textbf{1156} & 1500 & \textbf{1156} & 1162 & 1530 & 1162 \\
            & (1240) & (1528) & (1240) & (\textbf{1234}) & (1586) & (\textbf{1234}) \\
      \hline
      ami49 & \textbf{3290} & 4466 & \textbf{3290} & 3406 & 4676 & 3406 \\
            & (\textbf{3629}) & (3789) & (\textbf{3629}) & (3877) & (4137) & (3877) \\
      \hline
      n10 & \textbf{176} & \textbf{176} & \textbf{176} & \textbf{176} & \textbf{176} & \textbf{176} \\
          & (\textbf{220}) & (278) & (\textbf{220}) & (222) & (260) & (222) \\
      \hline
      n30 & \textbf{937} & 941 & \textbf{937} & 956 & 960 & 956 \\
          & (1047) & (1100) & (\textbf{1014}) & (1059) & (1108) & (1026) \\
      \hline
      n50 & 3194 & 3609 & 3184 & 3191 & 3598 & \textbf{3181} \\
          & (\textbf{3252}) & (5089) & (3296) & (3402) & (5078) & (3442) \\
      \hline
      n100 & \textbf{6748} & 7553 & \textbf{6748} & 6795 & 7623 & 6799 \\
           & (\textbf{7742}) & (10050) & (7746) & (7846) & (10146) & (7850) \\
      \hline
      n200 & 18016 & 18040 & 18008 & 17977 & 17993 & \textbf{17969} \\
           & (\textbf{16905}) & (26828) & (17610) & (17253) & (27041) & (17571) \\
      \hline
      n300 & 29639 & 29785 & \textbf{29627} & 29687 & 29841 & 29675 \\
           & (\textbf{32814}) & (35955) & (33093) & (33235) & (36624) & (33461) \\ \hline
\end{tabular*}}{}
\end{table}

In our congestion analysis, we focus on relative congestion (ration of routing demand $u$ and capacity $r$) instead of absolute congestion measured by total overflow in any routing resource (edge in case of grid graph model). This leads us to follow the method prescribed in \cite{wei} in order to analyze the global congestion scenario in a given floorplan instance. The authors in \cite{wei} used relative congestion (ratio of routing demand and routing capacity) in any edge of grid graph model instead of absolute congestion, i.e., estimating total overflow by computing excess routing demand with respect to the capacity over all edges. They proposed a new metric called \textit{average congestion per edge} based on relative congestion for a certain percentage of top congested edges among all congested edges. This is denoted as $ACE(x\%)$ where $x$ is the percentage value of the worst congested edges. Based on this metric, an weighted average of $ACE(x\%)$ for four different values of $x = {0.5,1,2,5}$ is computed and denoted as $wACE4$. In this work, we adopt a similar way to compute related congestion as \textit{normalized usage} (vide Equation \ref{eqn1a}).

In Table \ref{tab:cong-comp}, we capture the $wACE4$ values among all layers for each circuit for all the run configurations in BC and WC (in brackets) to showcase the corresponding congestion scenario when $100\%$ routability is achieved. The results validate that our method conforms to the congestion profile depicted in Figure \ref{fig:cong} and does not cross the value of $1.0$ in any case. The results presented in this table correspond to the maximum $wACE4$ value for a given circuit and the one with best value among all the configurations is highlighted with bold.
\begin{table}[!ht]
\scriptsize
\processtable{Congestion (wACE4) for different run configurations: for BC (WC in brackets)\\ floorplan instance \label{tab:cong-comp}}
     {\begin{tabular*}{0.7\textwidth}{@{\extracolsep{\fill}}|c|r|r|r|r|r|r|}\toprule
      \textbf{Circuit} & \textbf{$FCN$} & \textbf{$FCH$} & \textbf{$FCL$} & \textbf{$BCN$} & \textbf{$BCH$} & \textbf{$BCL$} \\
      \midrule
      apte & 0.9011 & 0.8384 & 0.9011 & 0.9449 & 0.9063 & 0.9449 \\
       & (0.8738) & (0.9861) & (0.8738) & (\textbf{0.7971}) & (0.8750) & (\textbf{0.7971}) \\
      \hline       
      hp & 0.9914 & 0.9871 & 0.9914 & 0.9871 & 0.9853 & 0.9871 \\
       & (0.9906) & (0.8135) & (0.9906) & (0.9906) & (\textbf{0.8021}) & (0.9906) \\
      \hline    
      xerox & 0.6229 & 0.6139 & 0.6229 & 0.6229 & \textbf{0.6090} & 0.6229 \\
       & (0.9940) & (0.9375) & (0.9940) & (0.9583) & (0.9375) & (0.9583) \\
      \hline
      ami33 & 0.9673 & \textbf{0.6874} & 0.9673 & 0.9712 & 0.6920 & 0.9712 \\
       & (0.9388) & (0.9944) & (0.9388) & (0.9120) & (0.8102) & (0.9120) \\
      \hline
      ami49 & 0.9750 & 0.9931 & 0.9750 & 0.9750 & 0.9976 & 0.9750 \\
       & (0.7129) & (0.7215) & (0.7129) & (0.7115) & (\textbf{0.7068}) & (0.7115) \\
      \hline
      n10 & \textbf{0.6435} & \textbf{0.6435} & \textbf{0.6435} & \textbf{0.6435} & 0.7670 & \textbf{0.6435} \\
       & (0.9625) & (0.9500) & (0.9625) & (0.9625) & (0.9500) & (0.9625) \\
      \hline       
      n30 & 0.6585 & 0.9826 & 0.6585 & \textbf{0.6570} & 0.9788 & \textbf{0.6570} \\
       & (0.9547) & (0.9667) & (0.9547) & (0.9546) & (0.9500) & (0.9546) \\
      \hline
      n50 & 0.8179 & 0.6627 & 0.8209 & 0.8163 & \textbf{0.6343} & 0.8182 \\
       & (0.8399) & (0.8246) & (0.8488) & (0.8644) & (0.8351) & (0.8733) \\
       \hline
      n100 & 0.9904 & 0.9772 & 0.9904 & 0.9871 & 0.9883 & 0.9871 \\
       & (\textbf{0.9187}) & (0.9974) & (\textbf{0.9187}) & (0.9282) & (0.9984) & (0.9282) \\
      \hline
      n200 & 0.4892 & 0.8269 & 0.4892 & \textbf{0.4831} & 0.8152 & \textbf{0.4831} \\
       & (0.8750) & (0.6793) & (0.8555) & (0.8923) & (0.6463) & (0.8328) \\
      \hline
      n300 & \textbf{0.6561} & 0.9863 & \textbf{0.6561} & 0.6681 & 0.9843 & 0.6681 \\
       & (0.9813) & (0.9964) & (0.9811) & (0.9774) & (0.9903) & (0.9770) \\ \hline
\end{tabular*}}{}
\end{table}

In Table \ref{tab:runtime-comp}, we report runtime in seconds for all the circuits versus the said run configurations. These results correspond to both BC and WC (in brackets) floorplan instances. As we can see that the best (as highlighted) runtime in the context of BC and WC (in brackets) floorplan instances for a given circuit is given by either $FCL$ or $BCL$ for most of the cases.
\begin{table}[!ht]
\scriptsize
\processtable{Runtime (sec) for different run configurations: for BC (WC in brackets)\\ floorplan instance \label{tab:runtime-comp}}
 {\begin{tabular*}{0.7\textwidth}{@{\extracolsep{\fill}}|c|r|r|r|r|r|r|}\toprule
      \textbf{Circuit} & \textbf{$FCN$} & \textbf{$FCH$} & \textbf{$FCL$} & \textbf{$BCN$} & \textbf{$BCH$} & \textbf{$BCL$} \\
      \midrule
      apte & 0.114 & 0.109 & \textbf{0.103} & 0.108 & 0.109 & 0.113 \\
       & (0.112) & (0.106) & (\textbf{0.102}) & (0.103) & (0.104) & (0.103) \\
 \hline
      hp & 0.115 & 0.107 & 0.105 & 0.107 & \textbf{0.102} & 0.103 \\
       & (0.114) & (0.107) & (\textbf{0.100}) & (0.107) & (0.102) & (0.101) \\
 \hline
      xerox & 0.126 & 0.124 & 0.122 & 0.122 & \textbf{0.121} & \textbf{0.121} \\
       & (0.140) & (0.125) & (\textbf{0.116}) & (0.130) & (0.125) & (0.120) \\
 \hline
      ami33 & 0.172 & 0.167 & \textbf{0.151} & 0.161 & 0.159 & 0.156 \\
       & (0.186) & (0.178) & (\textbf{0.158}) & (\textbf{0.158}) & (0.166) & (0.170) \\
 \hline
      ami49 & 0.464 & 0.453 & \textbf{0.438} & 0.441 & 0.458 & 0.442 \\
       & (0.483) & (0.465) & (\textbf{0.446}) & (0.463) & (0.465) & (0.460) \\
 \hline
      n10 & 0.122 & \textbf{0.099} & 0.101 & 0.101 & 0.101 & 0.102 \\
       & (0.109) & (\textbf{0.101}) & (0.105) & (0.103) & (0.105) & (0.102) \\
 \hline
      n30 & 0.181 & 0.164 & 0.166 & 0.164 & \textbf{0.160} & 0.164 \\
       & (0.162) & (0.161) & (\textbf{0.153}) & (0.163) & (0.164) & (0.157) \\
 \hline
      n50 & 0.423 & 0.410 & 0.398 & 0.414 & 0.399 & \textbf{0.396} \\
       & (0.449) & (0.437) & (0.440) & (0.444) & (0.441) & (\textbf{0.432}) \\
 \hline
      n100 & 2.480 & 2.408 & 2.412 & 2.399 & 2.445 & \textbf{2.386} \\
       & (2.076) & (2.096) & (2.064) & (2.073) & (2.070) & (\textbf{2.054}) \\
 \hline
      n200 & 18.342 & 18.741 & 17.718 & 18.201 & 18.175 & \textbf{17.533} \\
       & (21.472) & (20.885) & (20.909) & (21.452) & (21.315) & (\textbf{20.625}) \\
 \hline
      n300 & 53.151 & 54.596 & \textbf{50.929} & 52.739 & 53.656 & 51.730 \\
       & (57.993) & (58.719) & (\textbf{55.090}) & (55.856) & (58.742) & (57.069) \\ \hline
\end{tabular*}}{}
\end{table}

\subsection{Results for IBM HB benchmarks}
In this paper, we also used IBM HB floorplanning benchmarks \cite{hben} in Table \ref{bench2} for verifying the proposed early global routing method STAIRoute. These benchmarks were derived from ISPD98 placement benchmark circuits with certain modifications (see \cite{hben} for details). These floorplan instances for each circuit were generated using \textit{Parquet} \cite{adya,parque} using random seed.  
\begin{table}[!ht]
\centering{
   \processtable{IBM HB Floorplanning Benchmark Circuits \cite{hben} \label{bench2}}
   {\begin{tabular*}{0.6\textwidth}{@{\extracolsep{\fill}}|c|r|r|r|r|}\toprule
      Circuit & {\#Blocks} &  {\#Nets} & {Avg.} & {HPWL} \\ 
      Name &   &   & {NetDeg} & {($10^6 \mu$m)} \\     
      \midrule
      ibm01 & 2254 & 3990 & 3.94 & 8.98 \\ \hline
      ibm02 & 3723 & 7393 & 4.84 & 22.19 \\ \hline
      ibm03 & 3227 & 7673 & 4.18 & 23.83 \\ \hline
      ibm04 & 4050 & 9768 & 3.92 & 30.82 \\ \hline
      ibm05 & 1612 & 7035 & 5.58 & 18.12 \\ \hline
      ibm06 & 1902 & 7045 & 4.92 & 21.78 \\ \hline
      ibm07 & 2848 & 10822 & 4.44 & 42.48 \\ \hline
      ibm08 & 3251 & 11250 & 4.92 & 46.57 \\ \hline
      ibm09 & 2847 & 10723 & 4.08 & 48.35 \\ \hline
      ibm10 & 3663 & 15590 & 3.85 & 121.23 \\ \hline
    \end{tabular*}}{}}
\end{table}

As mentioned earlier, we can not directly compare our results with that of the state-of-the-art post-placment global routers. As a results, we come up with an indirect approach to compare wirelength obtained for each circuit, by normalizing it with steiner length computed by FLUTE \cite{cchu}. The comparison results are summarized in Table \ref{tab-Chap5:Routing-comp} and shown that the values related STAIRoute only marginally higher. This is due to the fact that STAIRoute does not have the scope of routing the tiny nets that connect the standard cells, while the others presented in the table perform those tiny nets as well, in addition to the longer nets that can also be abstracted at the flooplanning level.
\begin{table}
\scriptsize
\centering{
    \processtable{Normalized (w.r.t Steiner length \cite{cchu}) wirelength between the existing\\
     post-placement global routers and STAIRoute \label{tab-Chap5:Routing-comp}}
    {\begin{tabular*}{0.7\textwidth}{@{\extracolsep{\fill}}|c|r|r|r|r|r|r|r|}\toprule
      Circuit & \multicolumn{6}{c}{Post-placement Global Routers} &  {STAIRoute$^c$}\\
      Name  & \cite{ozdal}$^b$ & \cite{mcho1}$^b$ & \cite{zhang2}$^b$ &  \cite{royj}$^b$ & \cite{moffit}$^b$ & \cite{ychang}$^b$ & \\
      \midrule
      ibm01 & 1.071 & 1.042 & 1.068 & 1.053 & 1.059 & 1.039 & 1.156 \\ \hline
      ibm02 & 1.036 & 1.032 & 1.038 & 1.018 & 1.027 & 1.024 & 1.175 \\ \hline
      ibm03 & 1.007 & 1.007 & 1.007 & 1.005 & 1.010 & 1.005 & 1.175 \\ \hline
      ibm04 & 1.045 & 1.028 & 1.046 & 1.027 & 1.045 & 1.023 & 1.155 \\ \hline
      ibm05$^d$ & - & - & - & - & - & - & 1.198 \\ \hline
      ibm06 & 1.011 & 1.007 & 1.013 & 1.006 & 1.013 & 1.007 & 1.166 \\ \hline
      ibm07 & 1.018 & 1.006 & 1.015 & 1.007 & 1.016 & 1.007 & 1.192 \\ \hline
      ibm08 & 1.005 & 1.008 & 1.009 & 1.006 & 1.010 & 1.006 & 1.197 \\ \hline
      ibm09 & 1.007 & 1.006 & 1.009 & 1.004 & 1.011 & 1.008 & 1.199 \\ \hline
      ibm10 & 1.016 & 1.027 & 1.015 & 1.008 & 1.020 & 1.010 & 1.187 \\ \hline
      Average  & 1.024 & 1.018 & 1.024 &1.015 & 1.024 & 1.014 & 1.180 \\ \hline
    \end{tabular*}}{}}
  {\scriptsize $b$ - using ISPD98 global routing benchmarks, $c$ - using IBM-HB floorplanning benchmarks, and $d$ - no result on $ibm05$ of ISPD98 benchmark by the existing global routers}
\end{table}

\section{Conclusion}
\label{sec:discuss}
In this paper, we proposed a novel early global routing framework based on monotone staircase regions obtained by hierarchical bipartitioning of a given flooplan instances of a circuit. It thus immediately follows the floorplanning stage in the existing VLSI implementation flow and hence require no detailed placement of standard cells in order to conduct global routing of the nets available at this level of design abstraction. Unlike the existing global routers following the standard cell placement stage, the monotone staircase regions act as the routing resources and the nets are routed though them for a given number of metal layers. Both unreserved or reserved layer model can be used in this framework. The congestion scenario is modeled in this routing model in such way that the utilization is no more than $100\%$ in any of the routing resources, by switching to next permissible layer. Multi-terminal net decomposition using the proposed Steiner tree method is unique and no other existing methods are known to have a similar model. Our experimental results show that $100\%$ routing completion is possible without any congestion even for different capacity profiles that include constrained metal pitch/width variation due to recent fabrication processes. Additionally, employing different search directions shows that improvement in routed wirelength and via count may potentially be achieved. 

The proposed framework has a two fold potential advantage: (a) evaluate the given floorplan on the basis of early global routing results obtained by STAIRoute, and (b) estimate the global routing metrics and related information essential for the subsequent stages of the design flow. This work may be extended to incorporate design for manufacturability (DFM) issues by suitably modeling them into the routing cost.



\end{document}